\newtheorem{lemma}{Lemma}[section]
\newtheorem{proposition}{Proposition}[section]
\theoremstyle{definition}
\newtheorem{definition}{Definition}[section]
\newcommand{\blind}{0}
\begin{document}

\title{Anomaly detection in dynamic networks}

\if0\blind
	{
		\author{
			Sevvandi Kandanaarachchi$^1$\\
			Data61, CSIRO\\
			and \\
			Rob J Hyndman \\
			Monash University\\
		}
	} \fi

\maketitle

\footnotetext[1]{sevvandi.kandanaarachchi@data61.csiro.au}

\begin{abstract}
	Detecting anomalies from a series of temporal networks has many applications, including road accidents in transport networks and suspicious events in social networks. While there are many methods for network anomaly detection, statistical methods are under utilised in this space even though they have a long history and proven capability in handling temporal dependencies. In this paper, we introduce \textit{oddnet}, a feature-based network anomaly detection method that uses time series methods to model temporal dependencies. We demonstrate the effectiveness of oddnet on synthetic and real-world datasets. The R package \texttt{oddnet} implements this algorithm.
\end{abstract}

\noindent%
\textit{Keywords:} network anomaly detection, outlier detection, dynamic networks, network time series, temporal networks
\vfill
\newpage
\spacing{1.2}

\section{Introduction}\label{introduction}

Networks are ubiquitous, including biological networks such as neurological systems in our brains, infrastructure networks such as road, train, flight and electricity networks, and social networks including those provided by many social media platforms. Almost all networks have a time component: the traffic in our road networks change over time, different neurons fire and wire together as we do diverse tasks, and friend networks rarely remain constant. Detecting anomalies in dynamic networks is important because an anomaly may signify a fault, a road accident, a suspicious event, or an unusual incident that needs investigation. While statistical methods have a long history and success in modelling temporal dependencies, their use in network anomaly detection has been limited \citep{Ranshous2015}. In this paper, we present a network anomaly detection method that we call \textit{oddnet}, which fuses network analysis with time series techniques to identify anomalies in dynamic networks.

Different types of anomalies are studied in the context of dynamic networks; anomalous nodes/vertices, anomalous edges, and anomalous subgraphs are some structures of interest. Our focus is on anomalous networks; i.e., we find time points at which the network is anomalous. \citet{Akoglu2015} broadly categorise anomaly detection techniques in dynamic networks into four groups: feature-based, decomposition-based, community or clustering-based, and window-based methods. In feature-based systems, a ``good summary'' of each network is extracted, and consecutive networks are compared using a distance or a similarity matrix. Networks with large distances (small similarities) are deemed anomalous. Decomposition-based methods use matrix or tensor decomposition methods for network anomaly detection. The cluster-based event detection focuses on communities or clusters instead of the whole network. Window-based methods have a time window, where ``normal'' behaviour is modelled using the networks in the previous window. When a new network occurs, it is compared with the normal before labelling it as anomalous or not.

Due to overlapping research interests in statistics, mathematics, computer science and social science, multiple words are used to describe the same object. For example, the words `network' and `graph' are interchangeable. A vertex is often referred to as a node, and depending on context it can be called an actor; an edge is also called a link. The explosion in machine learning has produced a plethora of graph-based techniques applicable for different problem contexts. A recent survey by \citet{Ma2021} explores deep learning methodologies in network anomaly detection. Although statistical methods for network anomaly detection have received limited attention, there is a rich history of network modelling for inference using statistical tools. Exponential Random Graph Models (ERGM), introduced by \citet{Holland1981}, are a popular tool for network inference. ERGMs have been used in different contexts ranging from social media networks \citep{Robins2007} to economic applications \citep{Jackson2011}. \citet{Sharifnia2022} use an ERGM framework to detect changes in social networks. First they select a well-fitted ERGM for their data and estimate the coefficients for all configurations. Then they use Hotelling's $T^2$ distribution and Exponentially Weighted Moving Average (EWMA) control charts to monitor the coefficients and detect changes. \citet{Tsikerdekis2021} use ERGMs to detect attacks in computer networks.

We do not use ERGMs, largely because an ERGM needs to be well-fitted and a single set of ERGM covariates cannot explain networks from different contexts. Indeed, fitting an ERGM needs careful consideration of covariates so that model degeneracy is avoided. Both \citet{Sharifnia2022} and \citet{Tsikerdekis2021} have used ERGMs in a single context, which allows them to select the appropriate covariates or ERGM terms. We want our method to be usable for any sequence of dynamic networks in order to detect anomalies. To allow this level of generalizability, we use a feature-based, time series modelling technique to detect network anomalies. We call our method \textit{oddnet}. Oddnet uses time series analysis, which is equipped to deal with complex seasonal patterns that simple similarity measures cannot capture. We explain the details of oddnet in \autoref{methodology}. In \autoref{results} we evaluate the performance of oddnet on four synthetic experiments and four real datasets. The synthetic experiments consider different network generation models, and we compare the performance of oddnet with two other methods. The real datasets come from diverse sources. We analyse the output of oddnet and explore the networks visually in this section. In \autoref{conclusion} we provide some concluding remarks.

We have produced an R package \texttt{oddnet} \citep{oddnetR} containing our algorithm. We have also made all examples in this paper available at \url{https://github.com/sevvandi/supplementary_material/tree/master/oddnet}.

\section{Methodology}\label{methodology}

Consider a sequence of temporal networks/graphs $\{\mathcal{G}_t\}_{t = 1}^T$ where each network is a static graph $\mathcal{G}_t = (\mathcal{V}_t\, \mathcal{E}_t)$, with $\mathcal{V}_t$ denoting the vertices and $\mathcal{E}_t$ the edges. We are interested in the overall or long term process of network generation. An anomaly is defined as a point with low conditional probability
$$
	p_{t|t-1} = \mathcal{P}(\mathcal{G}_t \mid \mathcal{G}_1,\dots,\mathcal{G}_{t-1}).
$$
That is, a graph $\mathcal{G}_t$ is anomalous if its probability of occurrence is low given the graphs that have come before it.

While it is difficult to compute $p_{t|t-1}$ exactly, we can arrive at good approximations using different modelling techniques. Our approach models $p_{t|t-1}$ by transforming the graph to a feature space using a function $f$ and approximating the network generation process by modelling the features $f(\mathcal{G}_t)$ over time.

\subsection{Mapping from the graph space to the feature space}

Let $f$ denote a feature computation function such that $f: \mathscr{G} \to \mathbb{R}^n$ where $\mathscr{G}$ denotes the space of all networks, and let $\bm{x}_t$ denote the image of $\mathcal{G}_t$ under $f$. Thus $\bm{x}_t$ is a vector of $n$ features corresponding to $\mathcal{G}_t$:
\begin{equation}\label{eq:intro1}
	\bm{x}_t = f(\mathcal{G}_t) ,
\end{equation}
where $\bm{x}_t \in \mathbb{R}^n$. Let $\mathcal{F}$ denote the codomain of $f$, which we call the feature space. Consequently, we migrate from the network/graph space to the codomain --- a subset of Euclidean space. It is important that the feature space $\mathcal{F}$ and the graph space $\mathscr{G}$ have some agreement on the distribution of the graphs and their features. In other words, the feature space should retain some distributional aspects of the graph space. In order to analyse the relationship between the feature space and the graph space, we will define a further conditional probability
\begin{align*}
	\tilde{p}_{t|t-1} & = \mathcal{P}(\bm{x}_{t} \mid \bm{x}_1,\dots,\bm{x}_{t-1}).
\end{align*}
If there is strong distributional agreement between the feature space and graph space, then $p_{t|t-1}$ and $\tilde{p}_{t|t-1}$ should be similar. We define three levels of distributional agreement for a function $f$.

\begin{definition}\label{def:totalagreement}
	A function $f: \mathscr{G} \to \mathbb{R}^n$ satisfies the \textit{Total Agreement Condition} for a set of graphs $\{\mathcal{G}_t\}_{t = 1}^T$, if
	$$
		p_{t|t-1} = \tilde{p}_{t|t-1}\qquad\text{for all $t \in \{1, \ldots, T\}$.}
	$$
\end{definition}
The Total Agreement Condition is the strictest form of agreement between the two spaces $\mathscr{G}$ and $\mathcal{F}$. We define a somewhat relaxed version next.

\begin{definition}\label{def:approximateagreement}
	A function $f: \mathscr{G} \to \mathbb{R}^n$ satisfies the \textit{Approximate Agreement Condition} with parameter $\epsilon$, or said to be $\epsilon-$\textit{approximate} for a set of graphs $\{\mathcal{G}_t\}_{t = 1}^T$, if
	$$
		\left\vert p_{t|t-1} - \tilde{p}_{t|t-1} \right\vert \leq \epsilon\qquad \text{for all $t \in \{1, \ldots, T\}$,}
	$$
	where $\epsilon > 0$ and small.
\end{definition}
Unlike the Total Agreement Condition, the Approximate Agreement Condition requires the conditional probability of observing a graph $\mathcal{G}_t$ to be similar to observing $f(\mathcal{G}_t)$, which is an easier condition to meet. However, our interest is in anomalies, which are points with low $p_{t|t-1}$. For this reason we only need the graphs with low conditional probabilities to have similar probabilities in the feature space. We define this as follows:

\begin{definition}\label{def:anomalypreserving}
	A function $f: \mathscr{G} \to \mathbb{R}^n$ satisfies the \textit{Anomaly Preserving Condition} with parameter $\epsilon$, or said to be $\epsilon-$\textit{anomaly preserving} for a set of graphs $\{\mathcal{G}_t\}_{t = 1}^T$, if and only if $ p_{t|t-1} \leq \epsilon_1$ implies that there exists $ \epsilon_2 > 0$ with $\epsilon = \max(\epsilon_1, \epsilon_2)$ such that $\tilde{p}_{t|t-1} \leq \epsilon_2$. That is,
	$$
		p_{t|t-1} \leq \epsilon_1 \iff \tilde{p}_{t|t-1} \leq \epsilon_2 ,
	$$
	where both $\epsilon_1$ and $\epsilon_2$ are small.
\end{definition}
The anomaly preserving condition stipulates that anomalies in the graph space --- graphs with low conditional probability --- have low conditional probability in the feature space and vice versa. These three definitions are ordered in the sense that if a graph satisfies the Total Agreement Condition then it satisfies the Approximate Agreement Condition for all $\epsilon >0$. If a graph satisfies the Approximate Agreement Condition with parameter $\epsilon/2$ then it satisfies the Anomaly Preserving Condition with parameter $\epsilon$. We show this below.

\begin{lemma}\label{lemma:agreements}The agreement conditions are nested, i.e.,
	\begin{align*}
		\text{Total Agreement Condition}                               & \Rightarrow \text{Approximate Agreement Condition} \, ,        \\
		\text{ Approximate Agreement Condition}\left(\epsilon/2\right) & \Rightarrow \text{Anomaly Preserving Condition}(\epsilon) \, .
	\end{align*}
\end{lemma}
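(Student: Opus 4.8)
The plan is to verify each of the two implications directly from the three agreement conditions, since both reduce to elementary inequalities once the correct thresholds are identified. The first implication is essentially immediate, while the second is a short triangle-inequality argument; the only point requiring genuine care is matching the internal thresholds $\epsilon_1$ and $\epsilon_2$ of the Anomaly Preserving Condition so that their maximum lands exactly at the advertised parameter $\epsilon$.

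For the first implication I would start from the Total Agreement Condition, which gives $p_{t|t-1} = \tilde{p}_{t|t-1}$ for every $t$. Then $\lvert p_{t|t-1} - \tilde{p}_{t|t-1}\rvert = 0 \le \epsilon$ for all $t$ and for every $\epsilon > 0$, so the Approximate Agreement Condition holds (indeed for all $\epsilon$, as claimed in the surrounding text). No further work is needed here.

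For the second implication I would assume the Approximate Agreement Condition with parameter $\epsilon/2$, that is $\lvert p_{t|t-1} - \tilde{p}_{t|t-1}\rvert \le \epsilon/2$ for all $t$. The key step is to fix the graph-space anomaly threshold at $\epsilon_1 = \epsilon/2$. Then, whenever $p_{t|t-1} \le \epsilon_1$, I would bound $\tilde{p}_{t|t-1} \le p_{t|t-1} + \lvert p_{t|t-1} - \tilde{p}_{t|t-1}\rvert \le \epsilon/2 + \epsilon/2 = \epsilon$, so that taking $\epsilon_2 = \epsilon$ yields $\tilde{p}_{t|t-1} \le \epsilon_2$ with $\max(\epsilon_1,\epsilon_2) = \max(\epsilon/2,\epsilon) = \epsilon$. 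This is precisely the Anomaly Preserving Condition with parameter $\epsilon$. The halving in the hypothesis is exactly what allows the approximation slack $\epsilon/2$ to combine with the threshold $\epsilon/2$ and still produce the combined parameter $\epsilon$ through the maximum.

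I expect the main obstacle to be the correct reading of the biconditional in the Anomaly Preserving Condition. The estimate above delivers the forward direction (a graph-space anomaly maps to a feature-space anomaly), which is what the ``implies'' clause of the definition literally demands. The reverse direction is more delicate: the symmetric estimate only gives $p_{t|t-1} \le \tilde{p}_{t|t-1} + \epsilon/2 \le 3\epsilon/2$ under the choice above, so recovering $p_{t|t-1}\le\epsilon_1$ exactly would require either a different threshold pairing or a margin assumption separating anomalous from non-anomalous conditional probabilities. I would therefore present the forward implication as the substance of the lemma and note that the converse follows under such a separation, rather than claiming the fully symmetric $\iff$ at no cost.
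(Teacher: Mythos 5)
Your proposal is correct in its core computation and follows essentially the same route as the paper: the first implication is the same one-line observation, and the second rests on the same triangle-inequality bound $\tilde{p}_{t|t-1} \le p_{t|t-1} + \epsilon/2$ applied to graphs with $p_{t|t-1} \le \epsilon/2$. The one place you diverge is the biconditional. You fix the asymmetric pairing $\epsilon_1 = \epsilon/2$, $\epsilon_2 = \epsilon$, obtain only the forward direction, and flag the converse as requiring either ``a different threshold pairing or a margin assumption.'' The paper takes precisely the first of those two outs: it sets \emph{both} internal thresholds to $\epsilon/2$ and runs the identical triangle inequality twice, once in each direction --- graphs with $p_{t|t-1} \le \epsilon_1 = \epsilon/2$ satisfy $\tilde{p}_{t|t-1} \le \epsilon$, and symmetrically points with $\tilde{p}_{t|t-1} \le \epsilon_2 = \epsilon/2$ satisfy $p_{t|t-1} \le \epsilon$. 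Under this symmetric reading the reverse direction is no harder than the forward one; it is literally the mirror-image sentence. So the converse you hesitated over does not need a separation assumption, only the swap of roles you yourself suggested. That said, your hesitation points at something real: under a strict reading of Definition~\ref{def:anomalypreserving}, where a single fixed pair $(\epsilon_1, \epsilon_2)$ must make $p_{t|t-1} \le \epsilon_1 \iff \tilde{p}_{t|t-1} \le \epsilon_2$ hold exactly, neither your pairing nor the paper's delivers a genuine biconditional --- the paper's two implications use different hypothesis thresholds ($\epsilon/2$ in each space) with relaxed conclusions ($\epsilon$ in the other space), which is a two-sided containment of anomaly sets rather than an equivalence at fixed thresholds. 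Your proposal is thus more candid about the definitional looseness, while the paper's proof is complete relative to the (evidently intended) symmetric interpretation; had you carried out the mirror step instead of stopping, you would have reproduced the paper's argument exactly.
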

\begin{proof}
	Suppose a function $f$ satisfies the Total Agreement Condition with respect to a set of graphs $\{ \mathcal{G}_t\}_{t=1}^T$. Then for each graph $\mathcal{G}_t$ in the set, $p_{t|t-1} = \tilde{p}_{t|t-1}$, so the Approximate Agreement Condition is satisfied for all $\epsilon > 0$.

	If a function $f$ satisfies the Approximate Agreement Condition for a set of graphs $\{ \mathcal{G}_t\}_{t=1}^T$ with parameter $\epsilon/2$, then for all graphs $\mathcal{G}_t$ in the set we have $\left\vert p_{t|t-1} - \tilde{p}_{t|t-1} \right\vert \leq \epsilon/2$, so that $\tilde{p}_{t|t-1} \leq p_{t|t-1} + \epsilon/2$.

	Consider the subset of graphs $\mathcal{G}_t$ with $p_{t|t-1} \leq \epsilon_1 = \epsilon/2$. For these graphs we have
	$ \tilde{p}_{t|t-1} \leq \epsilon $ from the above inequality. Similarly, for the subset of points in the feature space with $\tilde{p}_{t|t-1} \leq \epsilon_2 = \epsilon/2$ we obtain $p_{t|t-1} \leq \epsilon$.
\end{proof}

As can be expected, things become simpler if the network generating process is iid. Then the Anomaly Preserving Condition implies that anomalies in the graph space are anomalies in the feature space without the need for conditioning.

\subsubsection{Examples}

We consider a couple of examples to illustrate these concepts. For both examples we consider a independent network generating process for ease of explanation. The first example looks at a sequence of 20 networks each having $N$ nodes where $N \in \{50, \ldots, 55\}$, generated using the Erdos-Renyi model with edge probability $p = 0.05$. Suppose an anomalous network is generated with an edge probability $p = 0.2$. Two normal networks and the anomalous network from this sequence are shown in \autoref{fig:ex1}.

\begin{figure}[!ht]
	\centering
	\includegraphics[scale = 0.6]{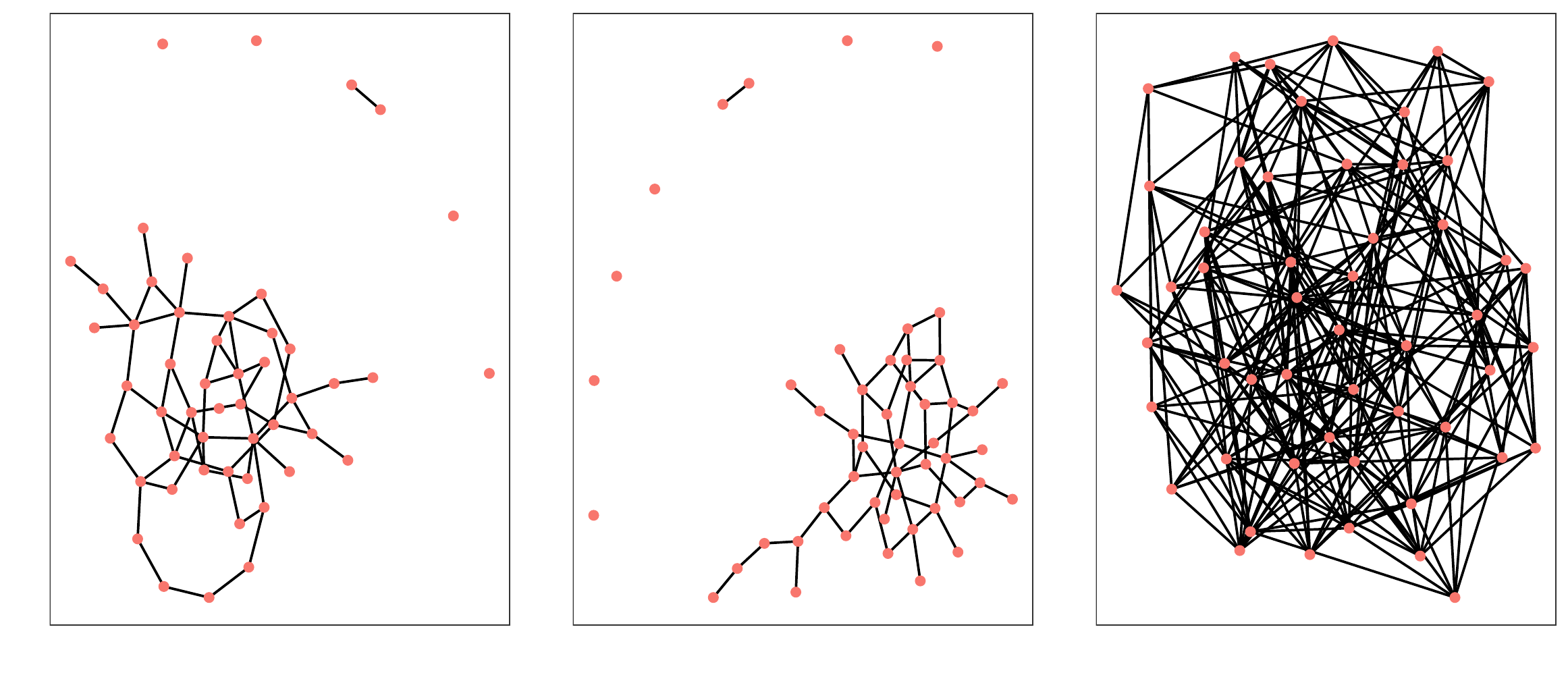}
	\caption{Three networks selected from a sequence of networks considered in Example 1. The first two are normal networks with edge probability 0.05 and the last shows an anomalous network having an edge probability 0.2. }
	\label{fig:ex1}
\end{figure}

As the second example we consider a sequence of 20 networks having $N \in \{50, \ldots, 55\}$ nodes with a total fixed number of edges randomly allocated. The anomalous network has all the edges emanating from a fixed node. Two normal networks from this sequence and the anomalous network are shown in \autoref{fig:ex2}.

\begin{figure}[!ht]
	\centering
	\includegraphics[scale = 0.6]{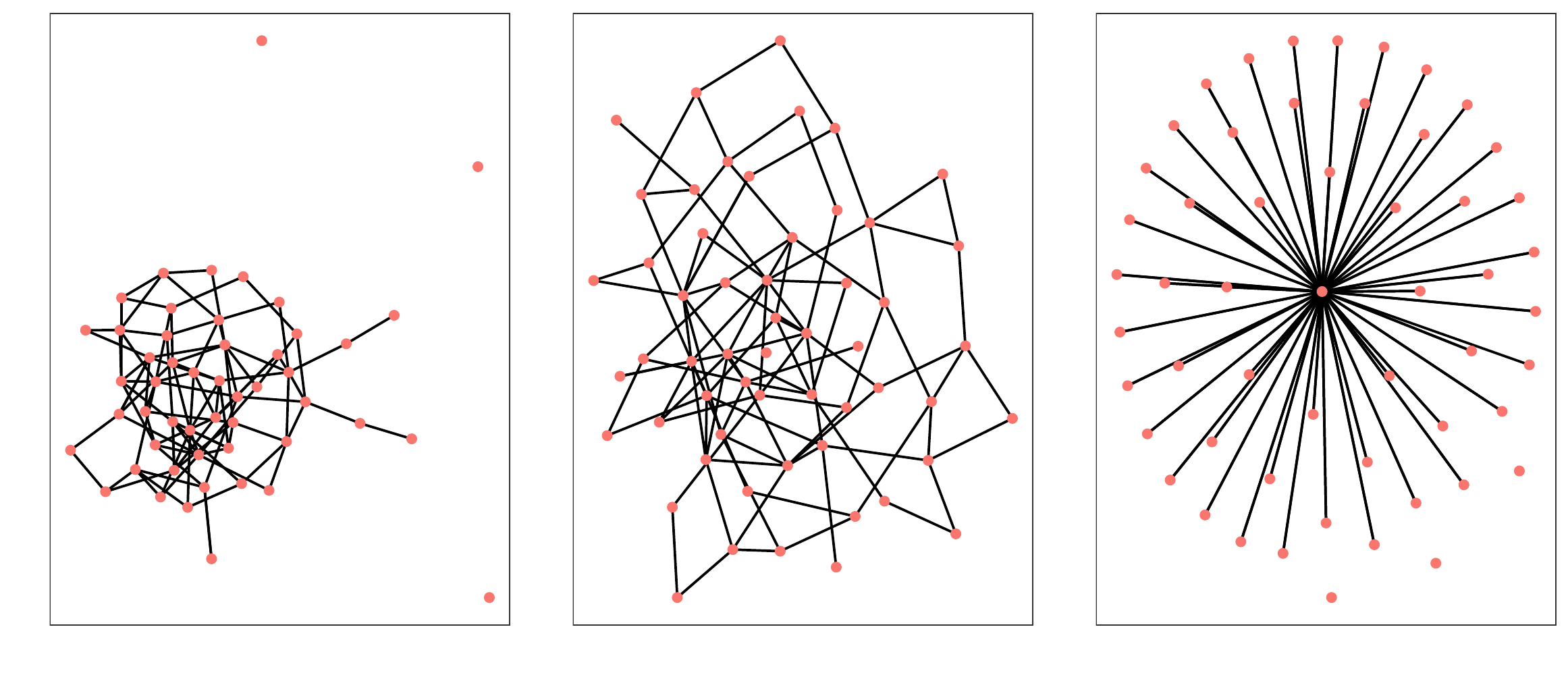}
	\caption{ Three networks shown from the sequence of networks in Example 2. Each network has 100 edges that are randomly selected. The anomalous network shown in the 3rd figure has all the edges connected to one common node. }
	\label{fig:ex2}
\end{figure}

For both examples, let us we consider the function $f = (f_1, f_2)$, where $f_1 =$ the number of edges and $f_2 =$ the number of nodes. In example 1, the number of edges in the anomalous network is much larger compared to the others. However, in example 2, the anomalous network has the same number of edges as the other networks. For both examples the number of nodes do not discriminate between the anomalous and the non-anomalous networks. The feature space for each example is shown in \autoref{fig:examplefeaturespace} with the anomalous network shown in a different colour. We see that the features can capture the anomaly in example 1, but not the anomaly in example 2. In example 1, the function $f$ satisfies the Anomaly Preserving Condition, i.e. the probability of anomalous networks in the feature space is low. However, in example 2, $f$ does not satisfy the Anomaly Preserving Condition, because the feature space is not rich enough to capture the particular anomaly. Incorporating features such as the clustering coefficient or the maximum node degree would solve the issue for this case. In practice, as we do not know the network generating process, we use a rich set of features.

\begin{figure}
	\centering
	\includegraphics[scale = 0.8]{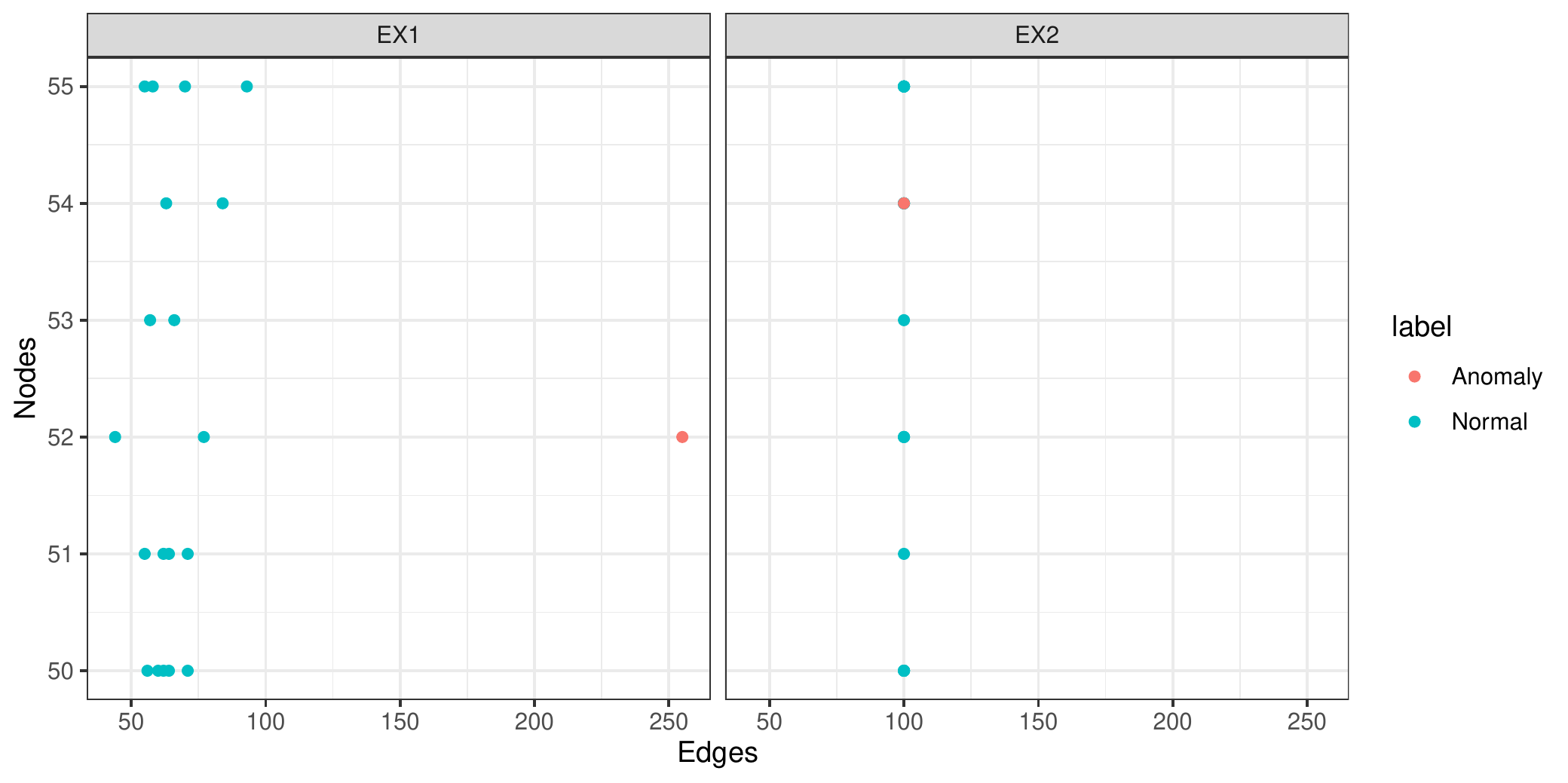}
	\caption{Feature space for examples 1 and 2.}
	\label{fig:examplefeaturespace}
\end{figure}

\subsection{The Features}\label{sec:features}

For each network $\mathcal{G}_t$ we compute 20 graph theoretic features from the R package \texttt{igraph} \citep{igraphR}; these are listed below.
\begin{compactenum}
	\item The number of nodes/vertices in $\mathcal{G}_t$.
	\item We compute the triangle distribution of $\mathcal{G}_t$ (the number of triangles connected to each node), and take the $99^{\text{th}}$ quantile of this distribution as the feature.
	\item We compute the degree distribution of $\mathcal{G}_t$ (the number of edges connected to each node), and take the $99^{\text{th}}$ quantile as our feature.
	\item The total number of edges in $\mathcal{G}_t$.
	\item The edge density is the ratio of the number of edges to the number of all possible edges.
	\item Transitivity, also known as friends of friends, or the clustering coefficient, measures the proportion of nodes where adjacent nodes are also connected. For example, if node A is connected to nodes B and C, then we consider if B and C are also connected.
	\item Assortativity takes an external property of the nodes into account. For example, consider the network of friends where an edge exists if A is friends with B. Suppose we add their political affiliations as attributes to each node. Now we can measure if friends have similar political preferences. This is known as homophily and sometimes referred to as `birds of a feather'. The assortativity coefficient measures the level of homophily in a graph. For graphs where nodes do not have an external attribute, the degree assortativity is computed.
	\item The mean graph distance calculates the mean of all shortest path distances between different nodes. If the graph is unconnected (i.e., not all nodes can be reached by a given node), then only the distances of the existing paths are considered.
	\item The diameter is the shortest distance between the two most distant nodes in a network.
	\item The proportion of isolated or non-connected nodes.
	\item The vertex connectivity gives the minimum number of vertices/nodes that needs to be removed to make the graph not strongly connected. A graph is a strongly connected if any vertex can be reached by any other vertex.
	\item The global efficiency is defined as the average inverse pairwise distances between all pairs of nodes.
	\item We extract two features from the connected components in the network. From the distribution of the number of nodes in each connected component we use the $99^{\text{th}}$ quantile as our feature.
	\item The number of connected components is also included.
	\item Centrality is a key aspect of network analysis. Closeness centrality of a vertex measures how close that vertex is to other vertices in the graph. It is defined as the inverse of the sum of distances to all vertices. We compute the closeness centrality for all vertices and include the proportion of vertices with closeness $ \geq 0.8$ in the feature vector. For this feature, we do not use the $99^{\text{th}}$ quantile because closeness centrality lies between 0 and 1 and for most graphs the $99^{\text{th}}$ quantile is equal to 1, making it a non-informative measure.
	\item Another important centrality measure is betweenness centrality. Betweenness measures how much a node connects other nodes by being a go-between. Suppose node A connects two groups of nodes, which otherwise would not be connected. In this case, node A has high betweenness centrality. It is defined as the number of shortest paths going through a node. Again, we compute the distribution for all nodes and take the $99^{\text{th}}$ quantile as our feature.
	\item PageRank is another measure of node importance. We compute the PageRank of all vertices and take the $99^{\text{th}}$ quantile.
	\item Hub scores also compute node importance. The hub scores of the vertices are defined as the principal eigenvector of $AA^T$ where $A$ denotes the adjacency matrix. We use the principal eigenvalue corresponding to the hub scores as a feature.
	\item Authority scores provide another measure of node importance. The authority scores of the vertices are the principal eigenvector of $A^TA$. We include the principal eigenvalue corresponding to the authority scores as a feature. The hub score and authority score features are identical for undirected graphs, but are different for directed graphs.
	\item Cores describe group/community aspects in a graph. The $k$-core of a graph is a maximal subgraph with minimum degree at least $k$. That is, each vertex in a $k$-core has degree greater than equal to $k$. For example, if a group of friends are in a $k$-core, each person in the group knows at least $k$ other people. The coreness of a vertex is defined as $k$ if it belongs to a $k$-core, not to a $(k+1)$-core. We compute the coreness for all vertices and include the $99^{\text{th}}$ quantile.
\end{compactenum}

For features where a distribution needs to be summarised, we have consistently used the $99^{\text{th}}$ quantile as our quantity of interest, as our purpose is anomaly detection, so we are interested in the tails of the probability distribution. In effect we strive to fulfil the Anomaly Preserving Condition by \begin{inparaenum}[(a)] \item including a diverse set of features; and \item using a large quantile where a feature distribution needs to be summarised. \end{inparaenum}

\subsection{Approximating the network generation process}

The network generating process can induce temporal dependencies in the feature space. For example, suppose the number of edges in a network increases over time. If at a particular time point there is a sudden drop in the number of edges, that would be considered an anomaly. We want to capture that behaviour. Finding anomalies in the feature space would not enable us to identify that particular anomaly as illustrated in \autoref{fig:nonfeatureanomaly}.

\begin{figure}[!ht]
	\centering
	\includegraphics[scale = 0.8]{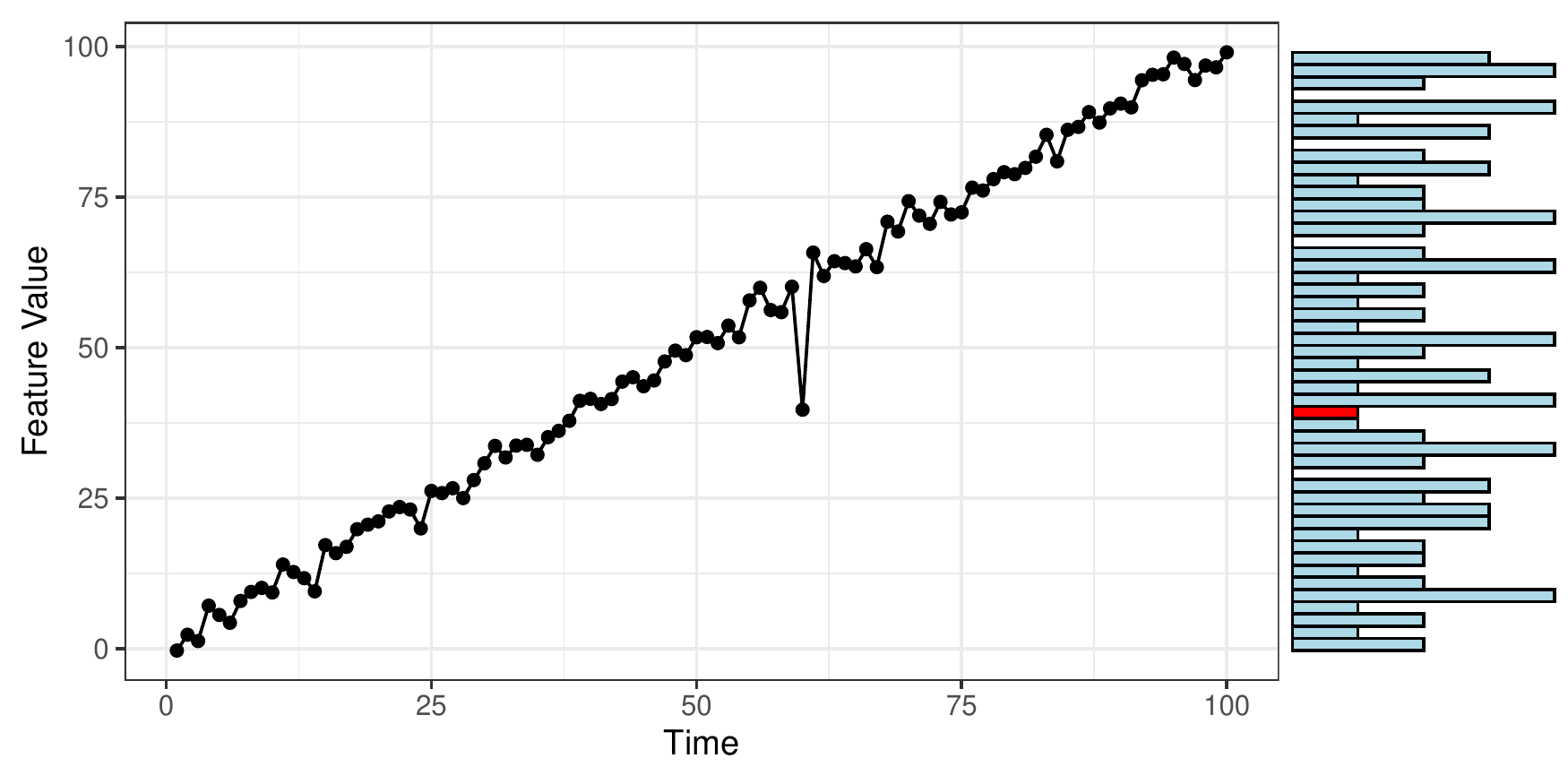}
	\caption{An anomaly takes place at time $= 60$. In the feature space, this observation is not anomalous as seen from the feature distribution on the y-axis. The bin corresponding to the anomalous observation is coloured in red.}
	\label{fig:nonfeatureanomaly}
\end{figure}

To model the temporal behaviour we use time series forecasting methods on the network features. Let $\bm{x}_t = (x_{t,1}, \dots,x_{t,n})$ where $f(\mathcal{G}_t) = \bm{x}_t$ and $x_{t,i}$ denotes the $i^{\text{th}}$ feature of $\mathcal{G}_t$. Let $\bm{x}_{ . ,i} = \{ x_{t, i} \}_{t = 1}^T $ denote the univariate time series of the $i^{\text{th}}$ feature. We fit ARIMA models to each feature time series $\bm{x}_{ . ,i}$ for $i \in \{1, \dots, n\}$ using automatic time series modelling methods \citep{hyndman2008automatic} via the R package \texttt{fable} \citep{fableR}. Thus, the best ARIMA model for each feature is fitted.

The residuals of the fitted ARIMA models are given by $e_{t,i} = x_{t, i} - \hat{x}_{t, i}$, where $\hat{x}_{t,i}$ is the forecast value of $x_{t,i}$ given $x_{1,i},\dots,x_{t-1,i}$ obtained from the corresponding ARIMA model. If the model has captured the temporal dependency adequately, then $e_{t,i} \sim \mathcal{NID}(0, \sigma^2)$ and the residuals will show no temporal dependence. Therefore, the transformation
\begin{equation}\label{eq:transformation}
	\mathcal{G}_t \mapsto \bm{x}_t \mapsto \bm{e}_t \, ,
\end{equation}
maps each network $\mathcal{G}_t$ to a point in $\bm{e}_t \in \mathbb{R}^n$ where they are free of temporal dependencies.

\begin{definition}\label{arimaapproximation}
	The ARIMA models \textit{approximate} the network generating process with parameter $\delta$, if
	$$ \tilde{p}_{t|t-1} \geq \delta \iff \lVert \bm{e}_t \rVert \leq c\left( \delta \right) \, ,
	$$
	for small $\delta$. This ensures that ARIMA models capture the general patterns including trend and seasonality in the feature space by mapping high density points in the conditional distribution $f(\mathcal{G}_t) \mid \mathcal{G}_1,\dots,\mathcal{G}_{t-1} $ to small residuals.
\end{definition}

We can now show that if certain conditions are satisfied, the anomalous graphs that give rise to large ARIMA residuals correspond to anomalies in the residual space.

\begin{proposition}\label{propanomalous}
	If a function $f$ satisfies the Anomaly Preserving Condition with parameter $\epsilon$ and if the ARIMA models approximate the network generating process with parameter $\delta$ with $\delta \geq \epsilon$, then an anomalous graph $\mathcal{G}_t$ gives rise to an anomaly in the ARIMA residuals space, i.e., there exists small $\xi$ such that $ \mathcal{P}\left( \bm{e}_t \right) \leq \xi$, where $\xi = \xi \left( \delta \right)$ and $\bm{e}_t$ is the residual corresponding to graph $\mathcal{G}_t$.
\end{proposition}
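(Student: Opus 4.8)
The plan is to chain the two hypotheses through the intermediate quantity $\tilde{p}_{t|t-1}$ and then to convert the resulting bound on the residual norm $\lVert \bm{e}_t \rVert$ into a bound on the residual density $\mathcal{P}(\bm{e}_t)$ using the distributional assumption on the ARIMA residuals. First I would unpack what ``anomalous graph'' means here: $\mathcal{G}_t$ is anomalous precisely when its conditional probability $p_{t|t-1}$ is small, say $p_{t|t-1} \leq \epsilon_1$ with $\epsilon_1 \leq \epsilon$. Since $f$ satisfies the Anomaly Preserving Condition with parameter $\epsilon$, \autoref{def:anomalypreserving} yields $\tilde{p}_{t|t-1} \leq \epsilon_2$ for some $\epsilon_2$ with $\epsilon = \max(\epsilon_1,\epsilon_2)$, and hence $\tilde{p}_{t|t-1} \leq \epsilon$.

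Next I would bring in the hypothesis $\delta \geq \epsilon$. Combining it with the previous bound gives $\tilde{p}_{t|t-1} \leq \epsilon \leq \delta$, i.e. the feature point has small conditional density in the feature space. I then take the contrapositive of \autoref{arimaapproximation}: the equivalence $\tilde{p}_{t|t-1} \geq \delta \iff \lVert \bm{e}_t \rVert \leq c(\delta)$ states that a feature point fails to be dense ($\tilde{p}_{t|t-1} < \delta$) exactly when its residual is large ($\lVert \bm{e}_t \rVert > c(\delta)$). Thus the anomalous graph is mapped to a residual with $\lVert \bm{e}_t \rVert > c(\delta)$.

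The final step turns the large residual norm into a low residual density. By construction of the transformation in \eqref{eq:transformation}, when the fitted ARIMA models capture the temporal dependence adequately we have $\bm{e}_t \sim \mathcal{NID}(0,\sigma^2)$. For a mean-zero Gaussian the density is a strictly decreasing function of $\lVert \bm{e}_t \rVert$, so the bound $\lVert \bm{e}_t \rVert > c(\delta)$ translates directly into $\mathcal{P}(\bm{e}_t) \leq \xi$, where $\xi = \xi(\delta)$ is the Gaussian density evaluated at radius $c(\delta)$. Because $c(\delta)$ is large when $\delta$ is small, the value $\xi(\delta)$ is correspondingly small, which establishes the claim.

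I would expect the bridge from the residual norm to the residual density to be the main subtlety, since it is the one step that leaves the purely deductive chaining of the defined conditions and instead relies on the modelling assumption $\bm{e}_t \sim \mathcal{NID}(0,\sigma^2)$ together with the radial monotonicity of the Gaussian density. Some care is also needed with strict versus non-strict inequalities when passing $\tilde{p}_{t|t-1} \leq \epsilon \leq \delta$ through the contrapositive of the equivalence in \autoref{arimaapproximation}; this is a routine boundary technicality (handled, e.g., by taking $\epsilon < \delta$ strictly) rather than a genuine obstacle.
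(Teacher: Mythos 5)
Your proposal is correct and follows essentially the same route as the paper's proof: chain the Anomaly Preserving Condition and $\epsilon \leq \delta$ to get $\tilde{p}_{t|t-1} \leq \delta$, invoke the negation of the equivalence in \autoref{arimaapproximation} to obtain $\lVert \bm{e}_t \rVert \geq c(\delta)$, and then use the Gaussian assumption on the residuals to conclude that such residuals have low density. If anything, your handling of the final step is slightly more careful than the paper's --- you read $\mathcal{P}(\bm{e}_t)$ explicitly as a density and use radial monotonicity of the Gaussian, and you flag the strict/non-strict inequality issue at the boundary $\tilde{p}_{t|t-1} = \delta$, which the paper passes over silently.
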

\begin{proof}
	For an anomalous graph $\mathcal{G}_t$, we have $\mathcal{P}\left( \mathcal{G}_t \mid \mathcal{G}_1,\dots,\mathcal{G}_{t-1} \right) \leq \epsilon'$ for some small $\epsilon'$. As $f$ satisfies the Anomaly Preserving Condition, $\tilde{p}_{t|t-1} \leq \epsilon$ for $\epsilon$ small. Because the ARIMA models approximate the network generating process we have
	$$ \tilde{p}_{t|t-1} \geq \delta \iff \lVert \bm{e}_t \rVert \leq c\left( \delta \right) \, ,
	$$
	giving us
	$$ \tilde{p}_{t|t-1} \leq \delta \iff \lVert \bm{e}_t \rVert \geq c\left( \delta \right) \, .
	$$
	The anomalous graphs are a subset of the above set because
	$$ \tilde{p}_{t|t-1} \leq \epsilon \Rightarrow \tilde{p}_{t|t-1} \leq \delta \, ,
	$$
	as $\epsilon \leq \delta$. This gives us
	$$ \tilde{p}_{t|t-1} \leq \epsilon \Rightarrow \lVert \bm{e}_t \rVert \geq c\left( \delta \right) \, .
	$$
	Thus, the anomalous graphs give rise to large residuals. As
	$$ e_{t,i} \sim \mathcal{N} \left( 0, \sigma^2 \right) 
	$$
	it ensures that $\bm{e}_t$ corresponding to anomalous $\mathcal{G}_t$ lies in low density regions making
	$$ \mathcal{P}\left( \lVert \bm{e}_t \rVert \geq c\left( \delta \right) \right) \leq \xi \, ,
	$$
	where $\xi = \xi \left( \delta \right)$.
\end{proof}

Therefore, this series of transformations maps an anomalous graph to an anomalous point in the ARIMA residual space, given the two conditions are satisfied. Automated ARIMA modelling is well recognised for capturing complex temporal patterns. We note that the Anomaly Preserving Condition is not very strict compared to Total Agreement and Approximate Agreement conditions. Furthermore, by computing a diverse set of features we expect the Anomaly Preserving Condition to be satisfied for most cases.

We illustrate this by computing the residuals for the example in \autoref{fig:nonfeatureanomaly}. \autoref{fig:residualspace} shows the residuals of an automatic ARIMA model fitted to the same feature. We clearly see the residual distribution shown in the histogram along the y-axis capture the anomalous behaviour at $t = 60$.

\begin{figure}[!ht]
	\centering
	\includegraphics[scale = 0.8]{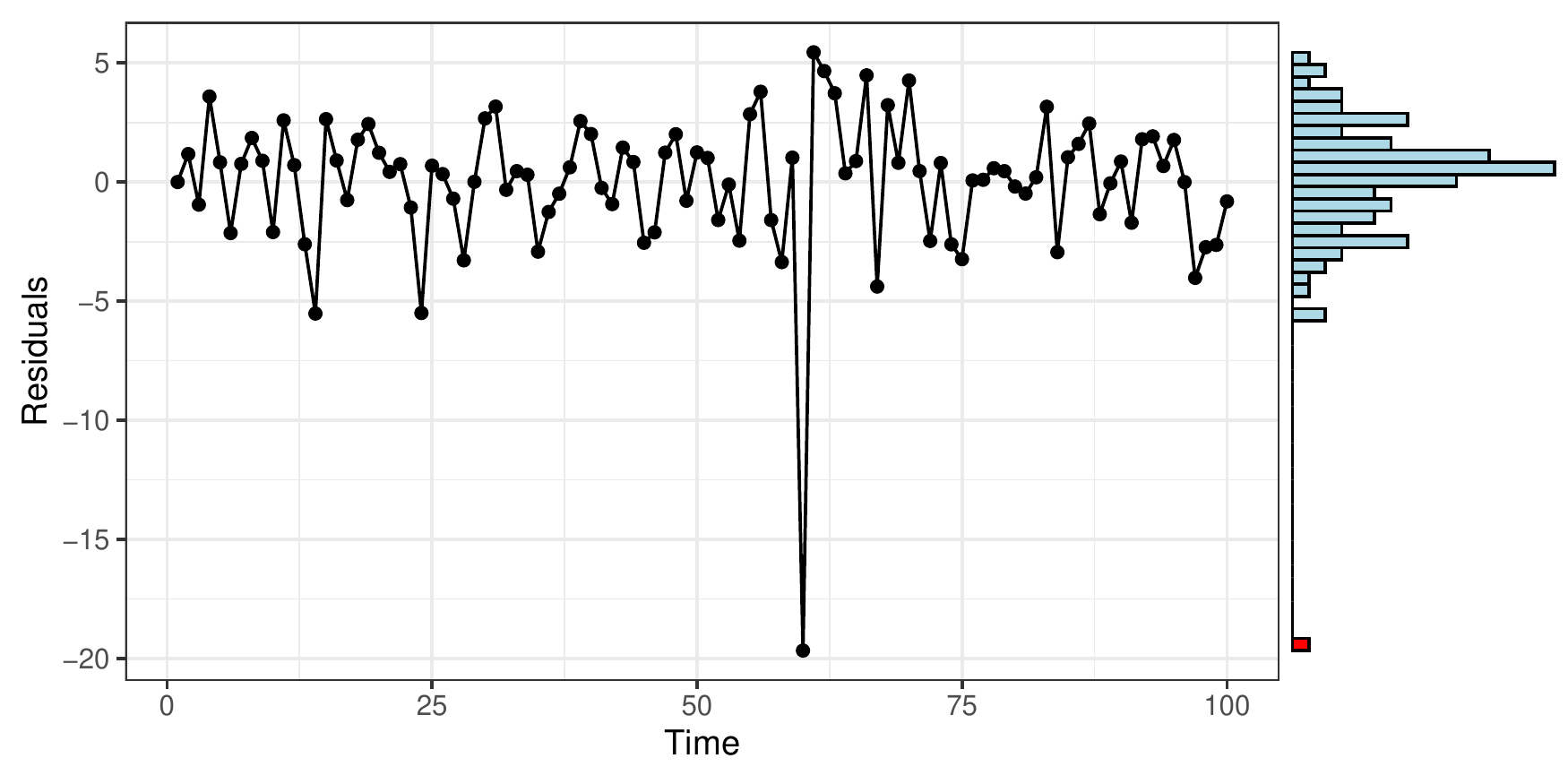}
	\caption{The residuals for the example in \autoref{fig:nonfeatureanomaly}. The histogram of the residual distribution is shown on the y-axis with the anomalous observation in red.}
	\label{fig:residualspace}
\end{figure}

\subsection{Dimension reduction and anomaly detection}\label{sec:dimredanomaly}

Finding anomalies in high dimensions is a challenge. Just as the curse of dimensionality affects other statistical learning tasks, it has an adverse effect on anomaly detection \citep{zimek2012survey}. It is hard to identify anomalies in high dimensions because points are far away from each other with many points residing in low density regions. 
Furthermore, errors of different features can be correlated. To aid this task we use dimension reduction. We experimented with two dimension reduction methods: dobin \citep{dobinpaper} and robust PCA \citep{croux2007algorithms}. For our network scenarios robust PCA gave better performance compared to dobin, possibly because dobin did not handle the correlations as well as robust PCA. Consequently, we selected robust PCA as the preferred dimension reduction method. Robust PCA uses a robust measure of variance for projecting the data into low dimensions. As the features and their residuals have different value ranges, we scale them using a trimmed mean and trimmed standard deviation as follows:
\begin{equation}\label{eq:scaling}
	\bm{y}_{.,i} = \frac{\bm{e}_{., i} - \tilde{\mu}_i}{ \tilde{s}_i} \, ,
\end{equation}
where $\bm{e}_{., i}$ denotes the residuals of the $i^{\text{th}}$ feature and $\tilde{\mu}_i$ and $\tilde{s}_i$ denote the trimmed mean and standard deviation of $\bm{e}_{., i}$ computed using the data within the 2.5 and 97.5 quantiles. We use robust PCA on the scaled residual space and use a 2-dimensional projection for anomaly detection.

We use the \textit{lookout} algorithm \citep{lookoutpaper}, a method that uses Extreme Value Theory (EVT) to detect anomalies. Anomaly detection methods using EVT have low false positives because of their inherent ability to handle fat tails. Lookout uses leave-one-out kernel density estimates for anomaly detection. A suitable bandwidth for anomaly detection is automatically selected using persistent homology, a technique in topological data analysis, bypassing the need to select a bandwidth manually.

\autoref{fig:methodology} shows different stages of our methodology. Algorithm \ref{algo:oddnet} sets out the key steps.

\begin{figure}[!ht]
	\centering
	\includegraphics[scale = 0.55]{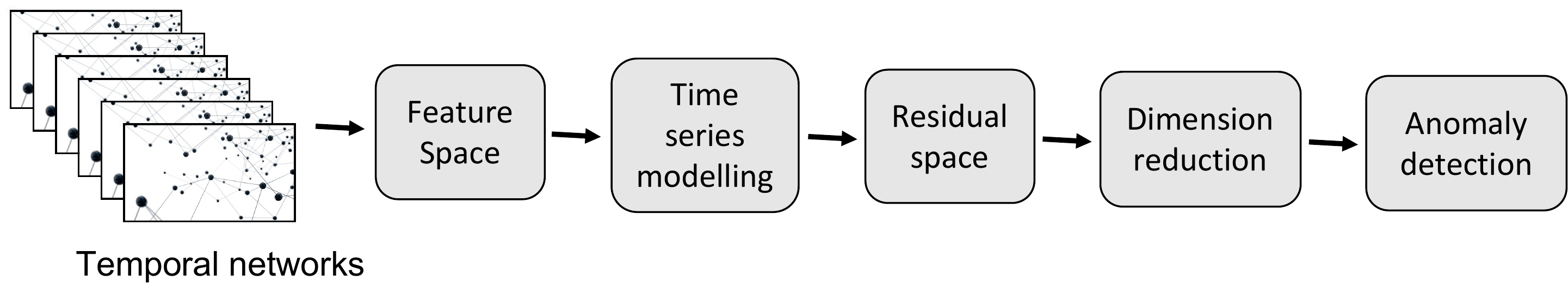}
	\caption{Different stages of oddnet.}
	\label{fig:methodology}
\end{figure}

\DontPrintSemicolon
\begin{algorithm}\fontsize{11}{12}\selectfont
	\SetKwInOut{Input}{input~~~}
	\SetKwInOut{Output}{output}
	\Input{~ The sequence of temporal networks $\left \{ \mathcal{G}_t \right\}_{t=1}^T$.}
	\Output{~ The anomalous networks}
	Map each network $\mathcal{G}_t$ to the feature space $\mathcal{F}$ by computing features $f\left( \mathcal{G}_t\right)$ as described in Section~\ref{sec:features}. Thus, $\mathcal{G}_t \rightarrow f\left( \mathcal{G}_t\right)$. \\
	Fit ARIMA models to $ \left \{ \bm{x}_t \right\}_{t=1}^T$ where $\bm{x}_t = f\left( \mathcal{G}_t\right)$. Let $\hat{\bm{x}}_t$ denote the fitted values of $\bm{x}_t$. \\
	Consider the residuals $\bm{e}_t = \bm{x}_t - \hat{\bm{x}}_t$ for $t \in \{1, \ldots, T\}$. \\
	Use robust PCA to reduce dimensions of the residual space. \\
	Use lookout to find anomalies in the 2-dimensional space.
	\caption{\itshape oddnet.}
	\label{algo:oddnet}

\end{algorithm}
\section{Results}\label{results}
In this section we conduct four experiments on synthetic networks and use four real datasets to identify anomalous networks. For the synthetic experiments we compare oddnet with two other methods: Laplacian Anomaly Detection (LAD) \citep{Huang2020} and Tensorsplat \citep{Koutra2012}. The real datasets do not have clearly defined, labelled anomalies, but we make use of prior analyses of the data sets when analysing the oddnet results, including where there is an existing analysis using LAD. Consequently, we use the real datasets to gain insights about the networks and possible interesting occurrences.

\subsection{Synthetic Experiments}\label{synthetic}

We generate networks using three models: Erd\H{o}s-R\'enyi random graph model \citep{Erdos1959}, Barab\'asi-Albert preferential attachment model \citep{albert2002statistical} and Watts-Strogatz small world model \citep{Watts1998}.

The Erd\H{o}s-R\'enyi model also known as the $G(n,p)$ model considers a graph with $n$ nodes with edge probability $p$ where an edge connects two nodes independently of other edges. Thus, the edges are allocated randomly. The total number of possible edges for a graph with $n$ nodes is ${n \choose 2}$. Using the $G(n,p)$ model, the probability of generating a graph with $M$ edges is $p^{M}(1-p)^{{n \choose 2} - M}$. Erd\H{o}s-R\'enyi graphs have been broadly used to solve combinatorial problems such as graph colouring. Due to its inherent randomness, Erd\H{o}s-R\'enyi graphs exhibit low clustering.

Unlike the Erd\H{o}s-R\'enyi model, the Barab\'asi-Albert (BA) preferential attachment model \citep{barabasi} advocates that new nodes are more likely to link with more connected nodes. These networks exhibit the `rich getting richer' phenomenon. They are also known as \textit{scale-free} networks. For a BA network, the probability of a new node connecting to an existing node $i$ is proportional to $k_i^{\alpha}$, where $k_i$ denotes the degree of node $i$. The exponent $\alpha$ is the key parameter of a BA model. When $\alpha = 0$ the BA model reduces to a random graph model as it is equally probable for a new node to connect to any other node. When $0 < \alpha < 1$, it is called sub-linear preferential attachment as the effect of preferential attachment is weak. Linear preferential attachment is given by $\alpha = 1$ and super-linear is given by $\alpha > 1$. It is when $\alpha \geq 1$ that the power of preferential attachment can be clearly observed. Preferential attachment models are commonly used to analyse social media networks as they show high levels of clustering.

The small world networks proposed by \citet{Watts1998} rewire regular networks, in which every node has constant degree, to introduce increasing amounts of disorder. The rewired networks tend to be highly clustered and yet have small path lengths. They start with a ring lattice with $n$ nodes and $k$ edges per node and rewire each edge randomly with probability $p$. Examples of small world networks can be found in neurological systems, power generator backbones and movie star networks. Figure~\ref{fig:networkgenmodels} shows three networks generated from these three network generating models. The differences between the networks are apparent.

For each synthetic experiment we focus on a single network generating model. The networks are generated for multiple values of the network generating parameter and for each parameter value, we use multiple randomisations. The anomalies are inserted at specific time points by changing the network generating parameters. We compare the performance of oddnet with Laplacian Anomaly Detection (LAD) \citep{Huang2020} and Tensorsplat \citep{Koutra2012} for the synthetic experiments. LAD computes the singular value decomposition of the Laplacian matrix for each network and uses the first $k$ singular values to identify anomalous behaviour. Tensorsplat performs PARAFAC decomposition and identifies anomalies using the low dimensional temporal factors. We use the Area Under the Receiver Operator Characteristic Curve (AUC) to compare the performance of the three methods.

\begin{figure}[!ht]
	\centering
	\includegraphics[scale = 0.8]{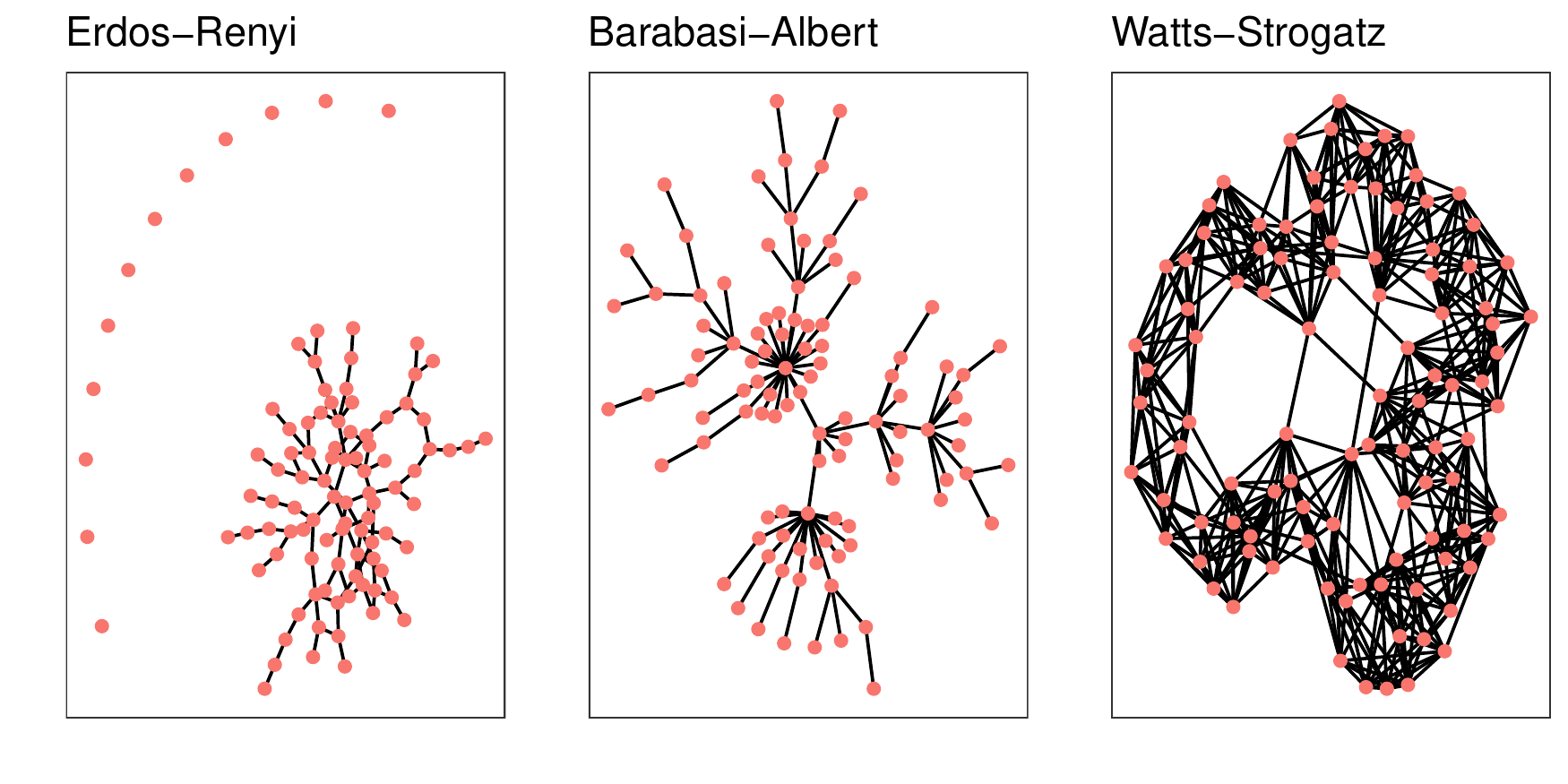}	\caption{Examples of networks generates from Erd\H{o}s-R\'enyi, Barab\'asi-Albert and Watts-Strogatz models.}
	\label{fig:networkgenmodels}
\end{figure}

\subsubsection{Experiment 1}

For the first experiment we use Erd\H{o}s-R\'enyi graphs with constant edge probability $p$; i.e., we model a static environment. We simulated a time series of 100 networks, each network having 100 nodes with $ p = 0.05$. The anomalous network occurred at time $t = 50$ and had edge probability $p = p_*$. We considered four iterations, each iteration having a unique value of $p_* \in \{0.1, 0.15, 0.2, 0.25 \}$. For each iteration we repeated the process 10 times to account for randomness.

\begin{figure}[!ht]
	\centering
	\includegraphics[scale = 0.8]{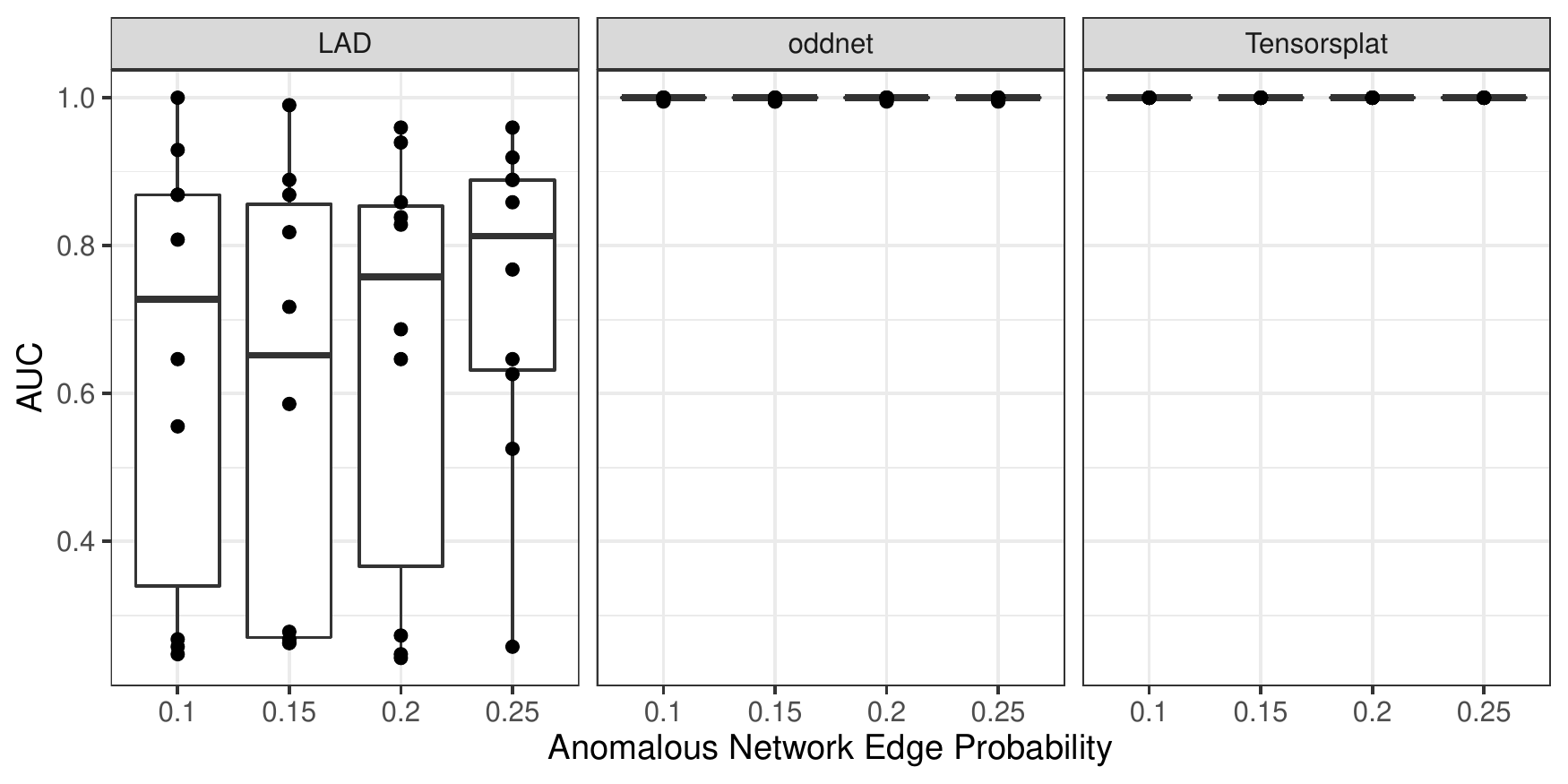}
	\caption{Experiment 1 results.}
	\label{fig:experiment1}
\end{figure}

\autoref{fig:experiment1} shows the results of the first experiment using boxplots. We see that both oddnet and Tensorsplat perform well for all parameter values. In contrast, LAD does not perform as well as the other two algorithms. From the viewpoint of the data, this is the easiest experiment because there are no temporal dependencies. The network generated at time $t_1$ is similar to the network generated at time $t_2$, which makes it easier to find anomalies.

\subsubsection{Experiment 2}

For the second experiment we simulated a dynamic environment with Erd\H{o}s-R\'enyi graphs. We constructed a series of 100 networks, each having 100 nodes with edge probability $p$ linearly changing from 0.05 to 0.5. An anomalous network is inserted at time $t = 50$ with $p = 0.2727 + p_*$ with $p_* \in \{0.05, 0.1, 0.15, 0.2 \}.$ As $p$ is linearly increasing from 0.05 to 0.5, the value 0.2727 corresponds to the edge probability of the $50^{\text{th}}$ network, before the anomaly is inserted. Allowing $p$ to change ensures that we have a dynamic sequence of networks, with the spiked $p_*$ at $t=50$ denoting the anomaly.

\begin{figure}[!ht]
	\centering
	\includegraphics[scale = 0.8]{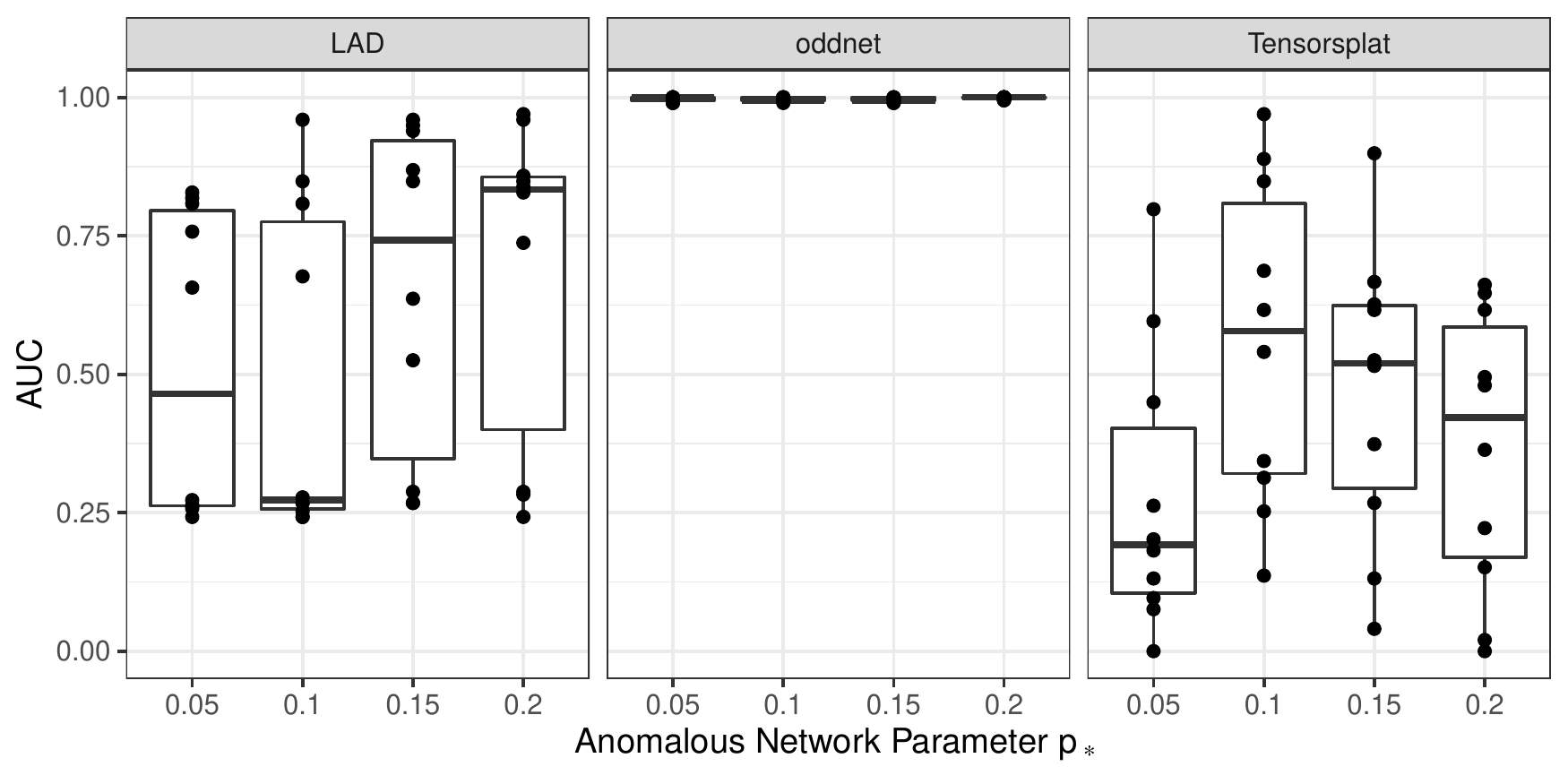}
	\caption{Experiment 2 results.}
	\label{fig:experiment2}
\end{figure}

\autoref{fig:experiment2} shows the results of the second experiment. We can see clearly that oddnet outperforms the other two methods. For this experiment LAD performs better than Tensorsplat.

\subsubsection{Experiment 3}

For the third experiment we used the preferential attachment model, which models the probability of a new node connecting to an existing node $i$ being proportional to $k_i^{\alpha}$, where $k_i$ is the degree of node $i$. We simulated a dynamic environment with $\alpha$ changing linearly from 1.1 to 1.9 over 100 networks. Each network has 100 nodes generated using this model. Again, an anomaly is inserted at $t = 50$ with $\alpha = 1.496 + p_*$ where $p_* \in \{0.25, 0.3, 0.35, 0.4 \}$. The value 1.496 corresponds to $\alpha$ of the $50^{\text{th}}$ network before inserting the anomaly. \autoref{fig:experiment3} shows the results, and we see that oddnet outperforms both LAD and Tensorsplat.

\begin{figure}[!ht]
	\centering
	\includegraphics[scale = 0.8]{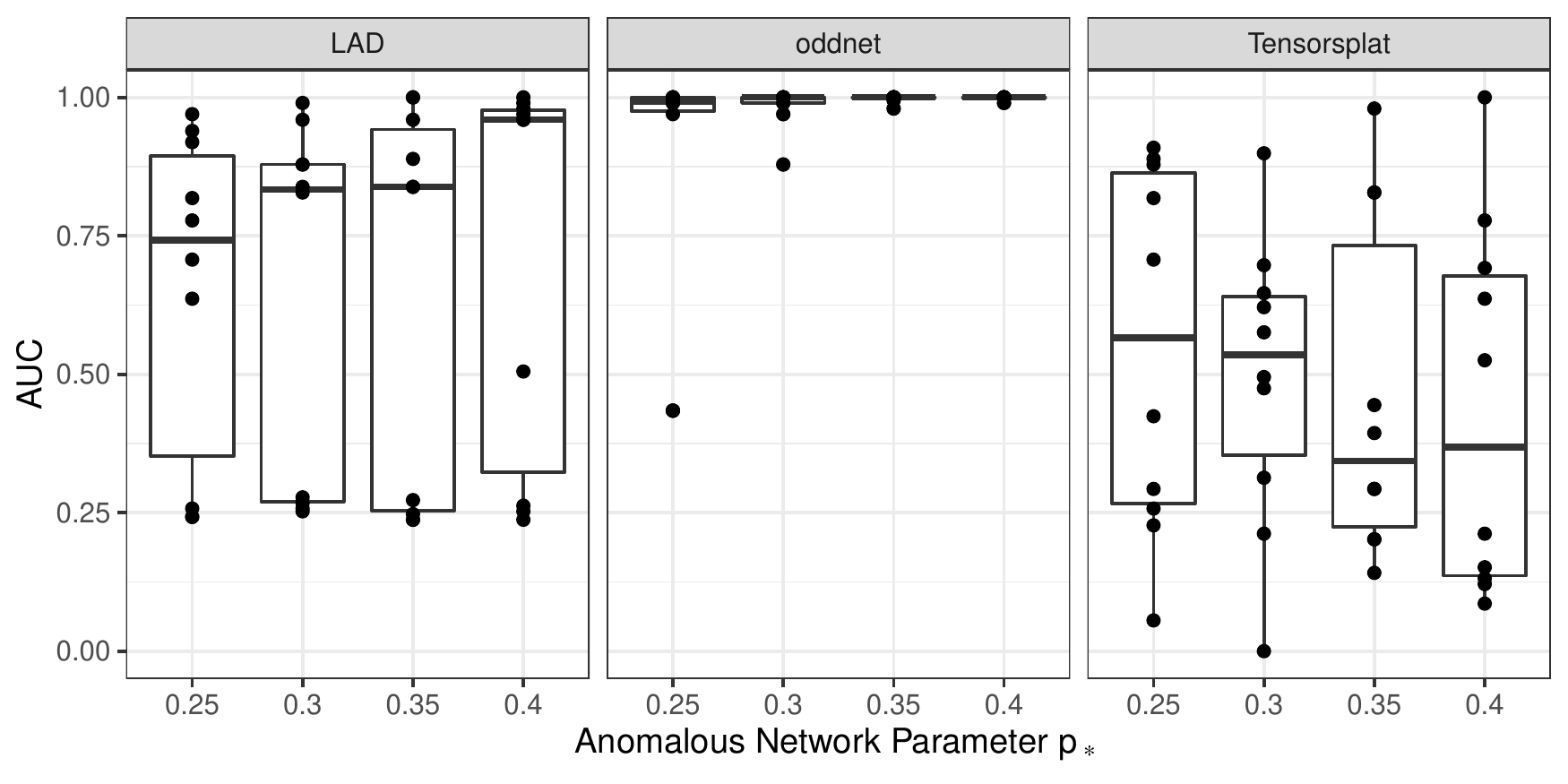}
	\caption{Experiment 3 results.}
	\label{fig:experiment3}
\end{figure}

\subsubsection{Experiment 4}

For the fourth experiment we used a small world model. We generated a series of 100 dynamic networks with the rewiring probability $p$ changing from 0.05 to 0.3. Each network is generated with 100 nodes. An anomaly is inserted at $t = 50$ with rewiring probability $p = 0.1737 + p_*$, with $p_* \in \{0.05, 0.1, 0.15, 0.2 \}$. As with previous examples, 0.1737 is the rewiring probability of the $50^{\text{th}}$ network before it was spiked to be anomalous.

\begin{figure}[!ht]
	\centering
	\includegraphics[scale = 0.8]{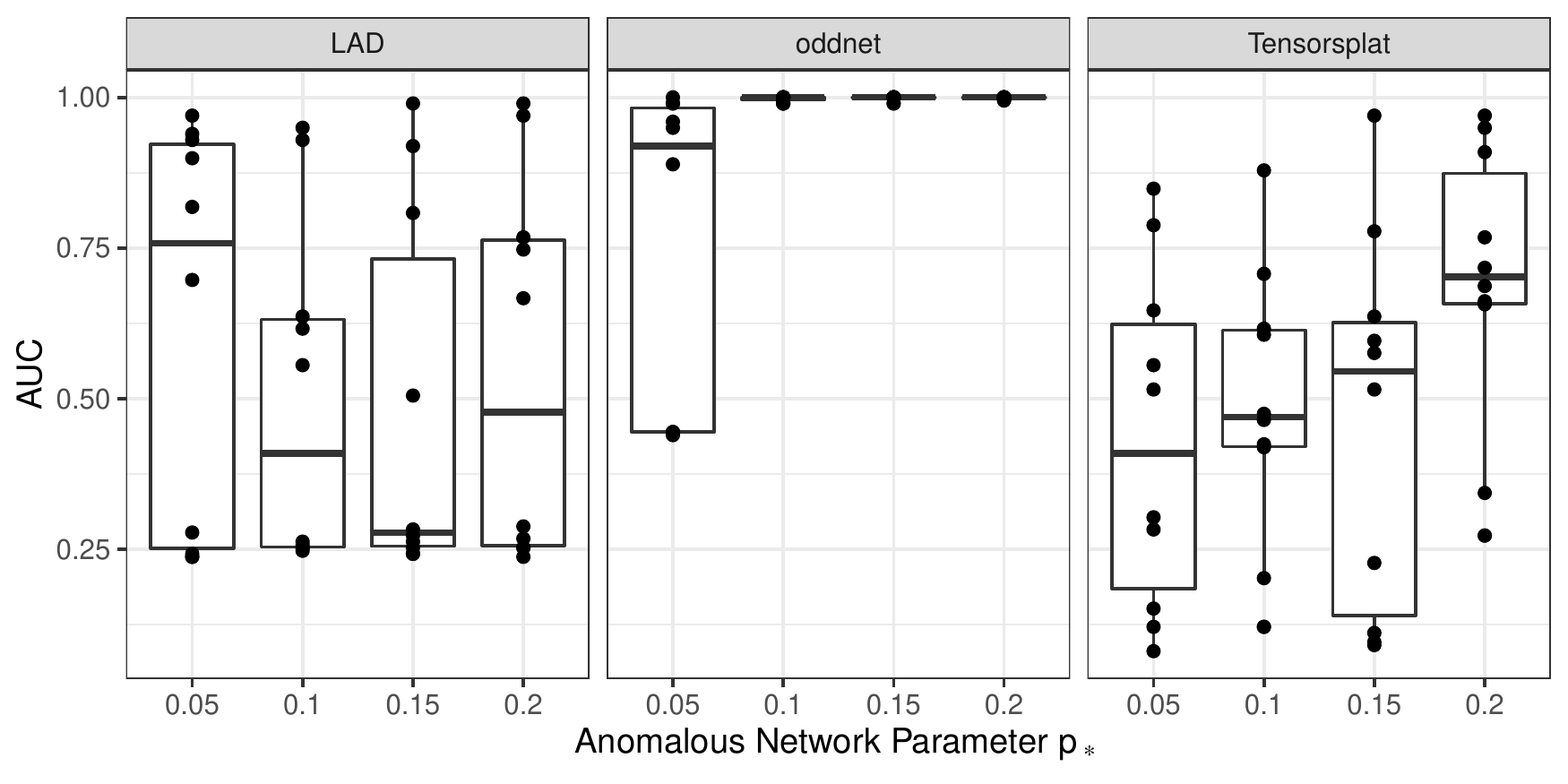}
	\caption{Experiment 4 results.}
	\label{fig:experiment4}
\end{figure}

\autoref{fig:experiment4} shows the results of this experiment. Again, we see that oddnet gives better performance than the other two methods.

\subsection{Real-world network datasets}

Next, we explore data arising from four real networks.

\subsubsection{US Senate co-voting network}

\citet{Lee2020} discuss and provide the data for the US Senate co-voting network from the $40^{\text{th}}$ Congress to the $113^{\text{th}}$ Congress. Their focus is on modelling the dynamic networks with models similar to Exponential Random Graph Models (ERGM) by accommodating temporal dynamics with varying coefficients. \citet{Huang2020} also discuss the US Senate co-voting network from the $93^{\text{th}}$ Congress to the $108^{\text{th}}$ Congress. Their focus is on anomaly detection. We use the networks provided by \citet{Lee2020} because there is more data.

In this example a node denotes a senator and an edge is formed between two senators if they vote on the same bill. \autoref{fig:ussenate1} shows the voting patterns of the $40^{\text{th}}$, $64^{\text{th}}$, $89^{\text{th}}$ and $113^{\text{th}}$ Congresses with red denoting Republicans and blue denoting Democrats. We see the voting patterns significantly change over time with some Congresses being more clustered within the political parties.

\begin{figure}[!b]
	\centering
	\includegraphics[scale = 0.8]{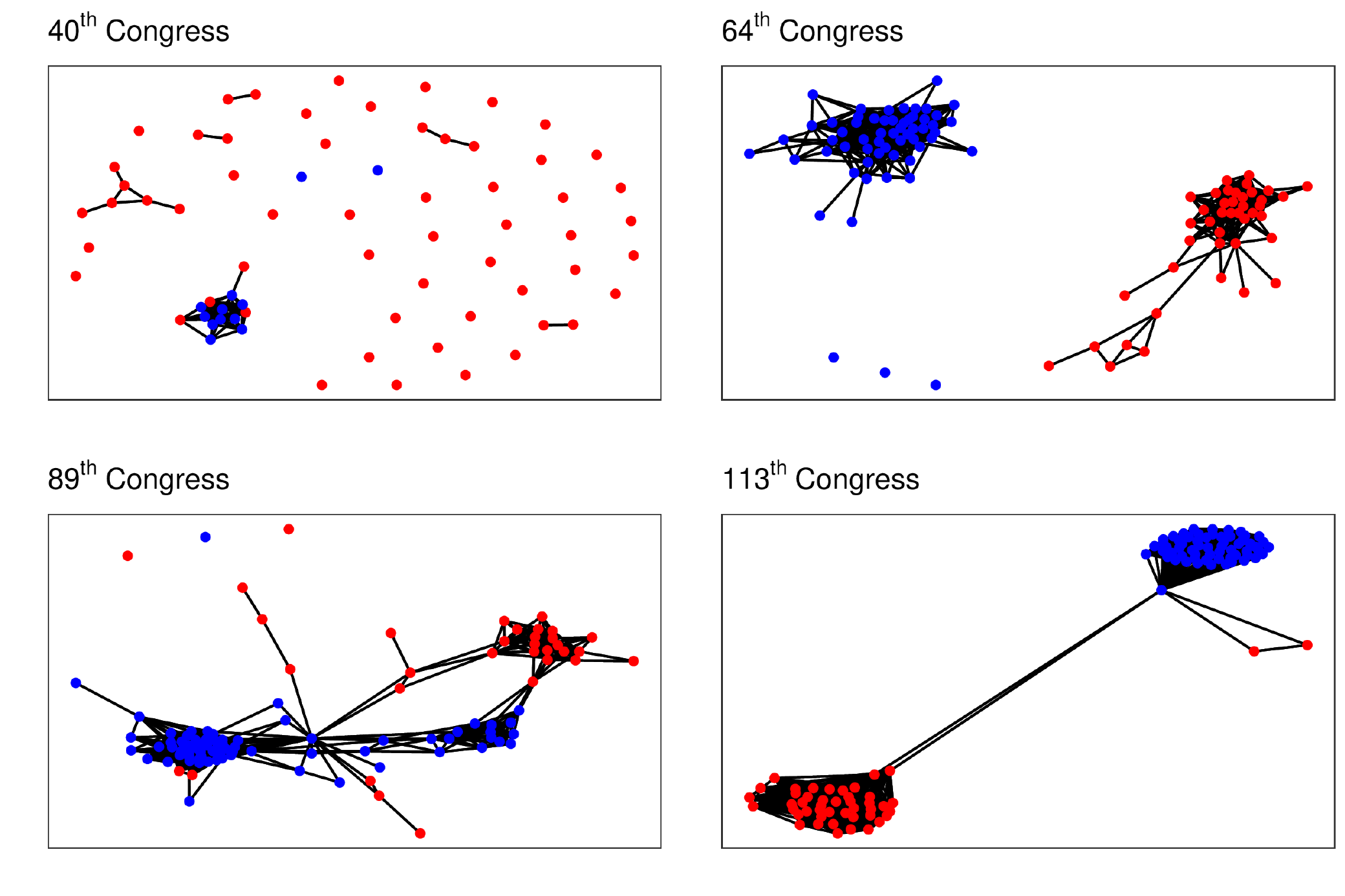}
	\caption{US Senate co-voting networks for the $40^{\text{th}}$, $64^{\text{th}}$, $89^{\text{th}}$ and $113^{\text{th}}$ Congresses.}
	\label{fig:ussenate1}
\end{figure}

\autoref{fig:ussenate2} shows the results of oddnet on this dataset. It shows the conditional probability of each network where networks with low conditional probabilities are considered anomalous. For all experiments, the level of significance $\alpha = 0.05$. It acts as a threshold and is shown by a dashed line. Points below the dashed line are identified as anomalies. The $100^{\text{th}}$ Congress is identified as an anomaly by oddnet. This result agrees with the work of \citet{Huang2020}; their largest anomaly is also the $100^{\text{th}}$ Congress, due to a relatively high level of collaboration.

\begin{figure}[!ht]
	\centering
	\includegraphics[scale = 0.8]{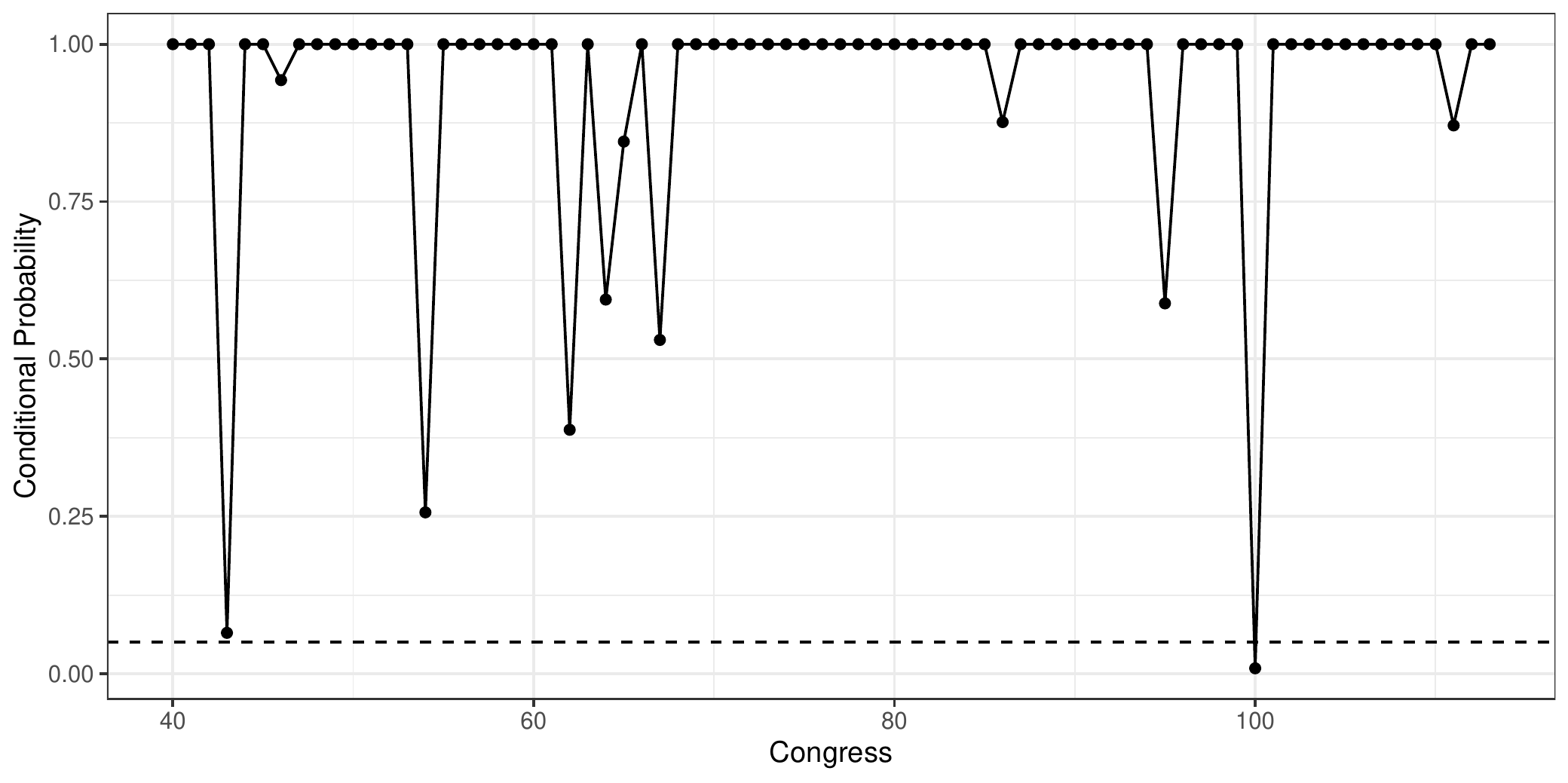}
	\caption{The conditional probability of each network calculated by oddnet for the US Senator co-voting networks.}
	\label{fig:ussenate2}
\end{figure}

Another Congress with low conditional probability is the $43^{\text{rd}}$ Congress, with a conditional probability of 0.06, which is just above the cut-off line $\alpha = 0.05$. As \citet{Huang2020} consider only networks from $93^{\text{th}}$ Congress to the $108^{\text{th}}$ Congress, their analysis does not include this network. The $43^{\text{rd}}$ Congress encompassed the years from 1873 to 1875. The \textit{Panic of 1873} caused an economic depression in the United States and was known as the \textit{Great Depression} until the 1929 event claimed the name \citep{barreyre_2011}. With widespread corruption, riots and public outcry, these years were marked with economic instability. \citet{barreyre_2011} explain how the ``economic crisis of 1873 transformed into a pivotal political event''. While not conclusive, this leads us to believe that the $43^{\text{rd}}$ Congress may have been quite different compared to others.

\subsubsection{UCI Message network}

\citet{Panzarasa2009} made available the network interactions of an online community of students at the University of California, Irvine. The dataset covers a period from April to October 2004. A total of 1899 users were recorded during this period. A student could use the online platform to send messages to other users; users could view and search the profile of others and send messages using the platform.

A directed edge is formed from user A to user B, if user A contacts user B via the platform. \autoref{fig:ucimessage1} shows the daily networks for six different days. We see enough evidence that it is not a static network. Of the networks shown, day 45 has a very high activity level compared to other days. \citet{Panzarasa2009} note two phases in this communication network. The first phase sees a rapid increase of users and acquaintances and lasts for approximately the first six weeks. They also list two additional dates corresponding to the end of the Spring term and the start of the Fall term: June 19 and September 20.

\begin{figure}[!ht]
	\centering
	\includegraphics[scale = 0.7]{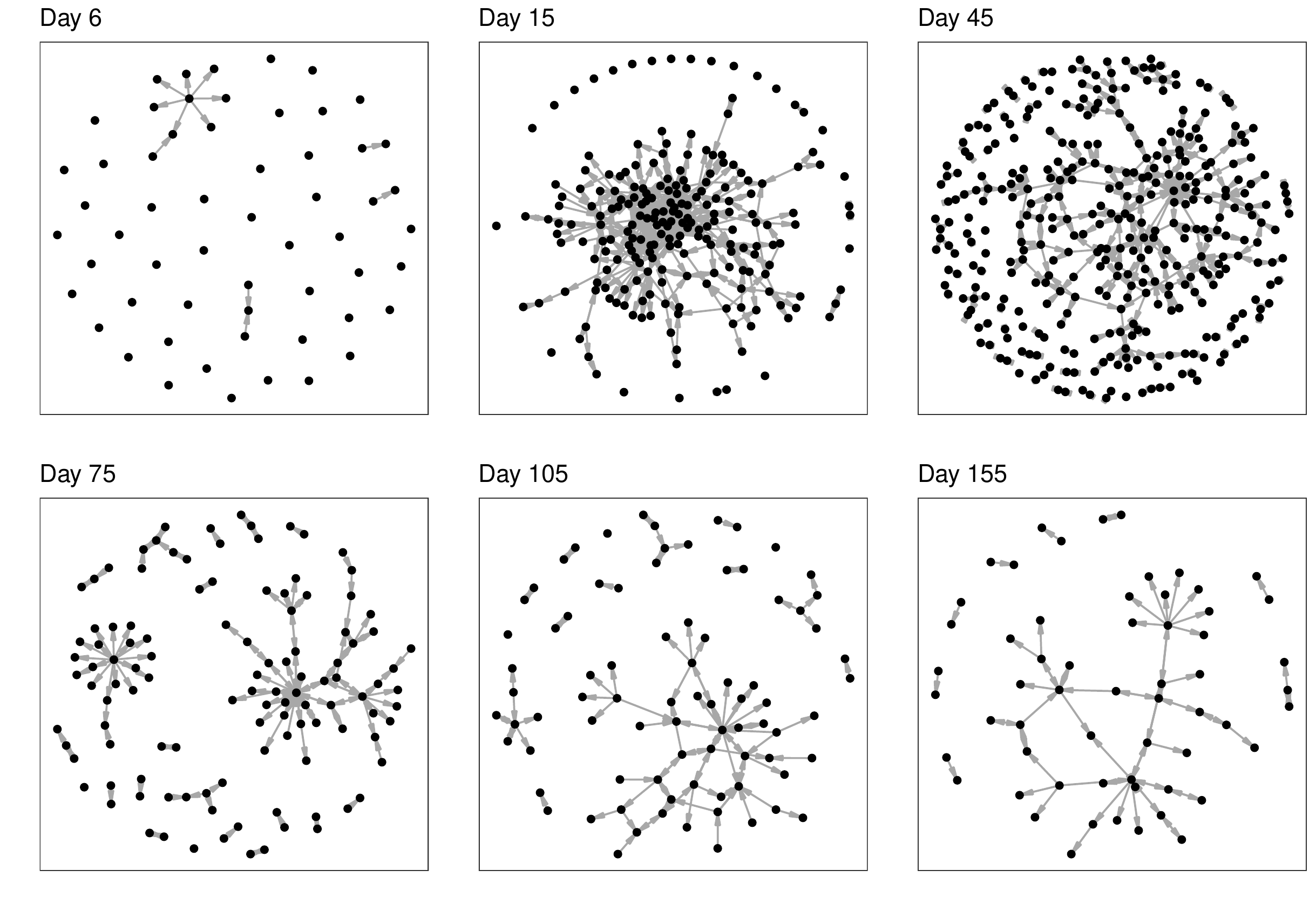}
	\caption{Days 6, 15, 45, 75, 105 and 155 of UCI message network.}
	\label{fig:ucimessage1}
\end{figure}

\autoref{fig:ucimessage2} shows the conditional probabilities of the networks with a dashed lined at $\alpha = 0.05$. We get four days flagged as anomalous. These are May 22, May 28, June 15 and September 24. May 28 is the $45^{\text{th}}$ day of the dataset and corresponds with the end of the first phase. Even though we do not get June 19 and September 20 flagged as anomalous days, we get two anomalies --- June 15 and September 24 --- that are close to the end of the Spring term and the start of the Fall term.

\begin{figure}[!tb]
	\centering
	\includegraphics[scale = 0.8]{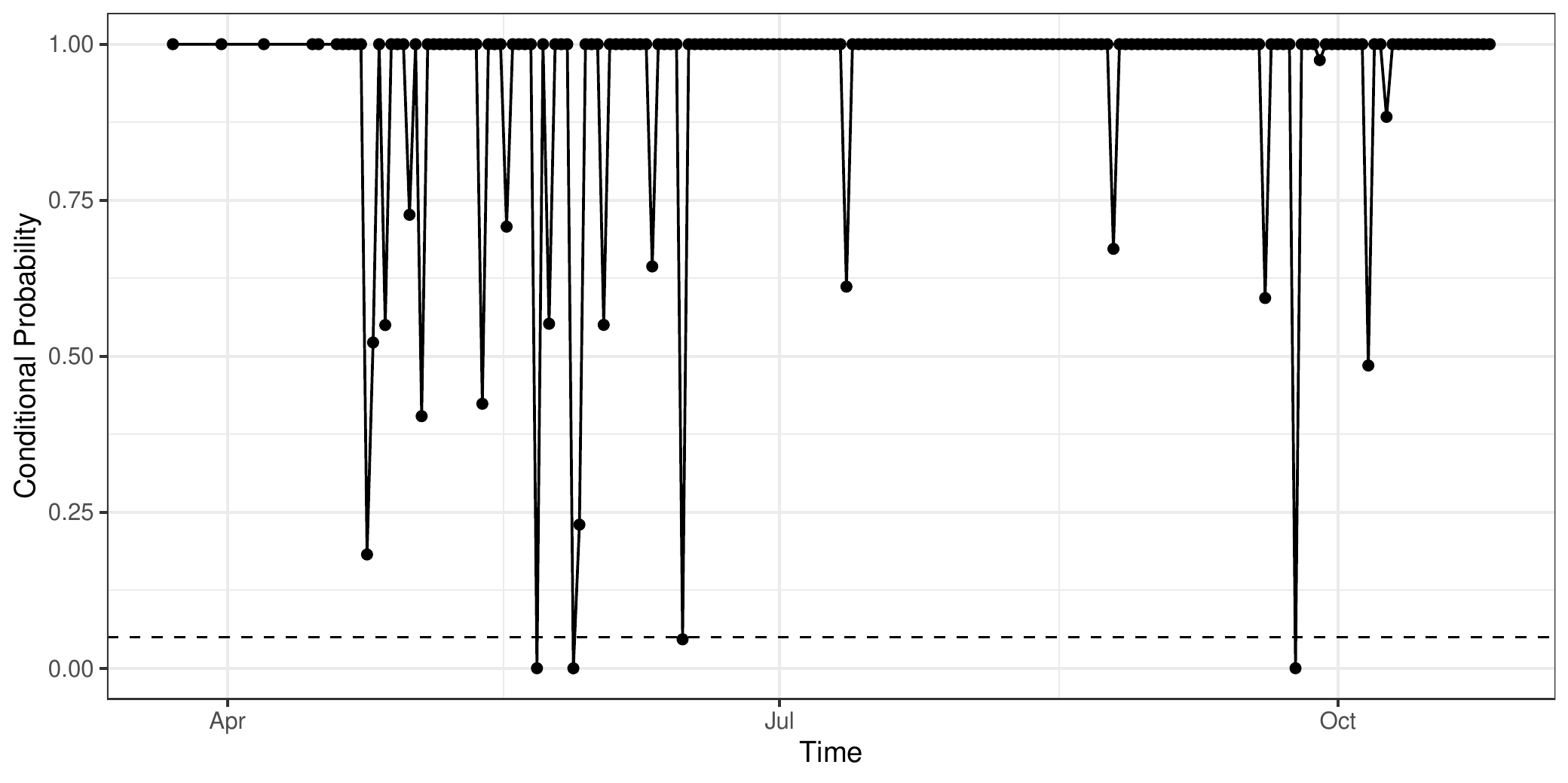}
	\caption{Oddnet conditional probabilities for the UCI message networks.}
	\label{fig:ucimessage2}
\end{figure}

\subsubsection{Canadian bill voting dataset}

\citet{Huang2020} have made available a dataset of the Canadian Parliament's bill voting patterns. The dataset spans from 2006 to 2019. They note that the House of Commons increased the number of electorates from 308 to 338 in 2015. They also note that 2015 is anomalous from another aspect: the Liberal party won an additional 148 seats taking up a total of 184 seats and formed a majority government led by Justin Trudeau. Prior to this, the Liberal party was divided. \citet{Cross2016} discuss the effects of a unified campaign at a local level.

\begin{figure}[!b]
	\centering
	\includegraphics[scale = 0.65]{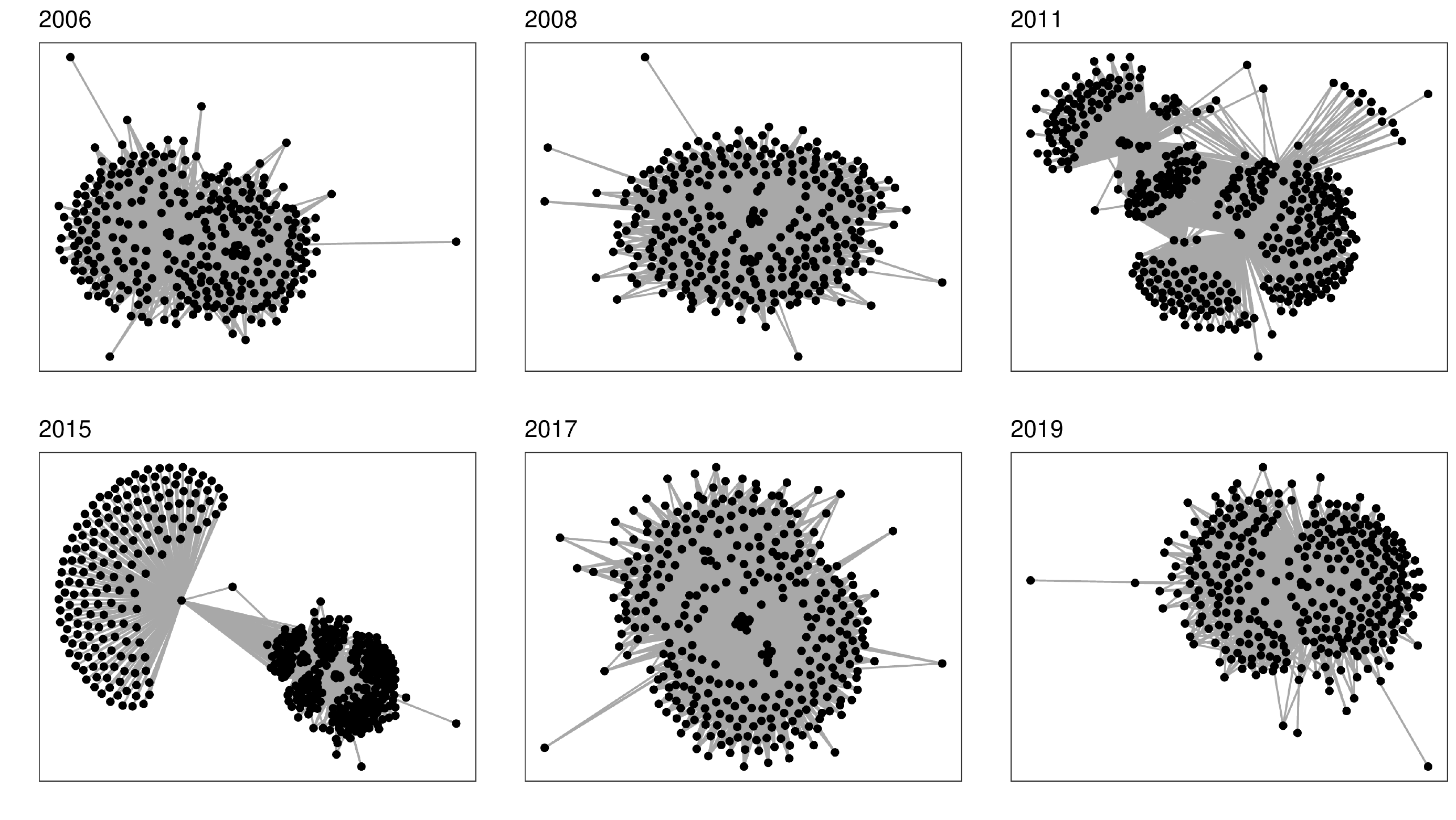}
	\caption{Canadian Parliament's bill voting networks for 2006, 2008, 2011, 2015, 2017 and 2019.}
	\label{fig:canadian1}
\end{figure}

\begin{figure}[!b]
	\centering
	\includegraphics[scale = 0.8]{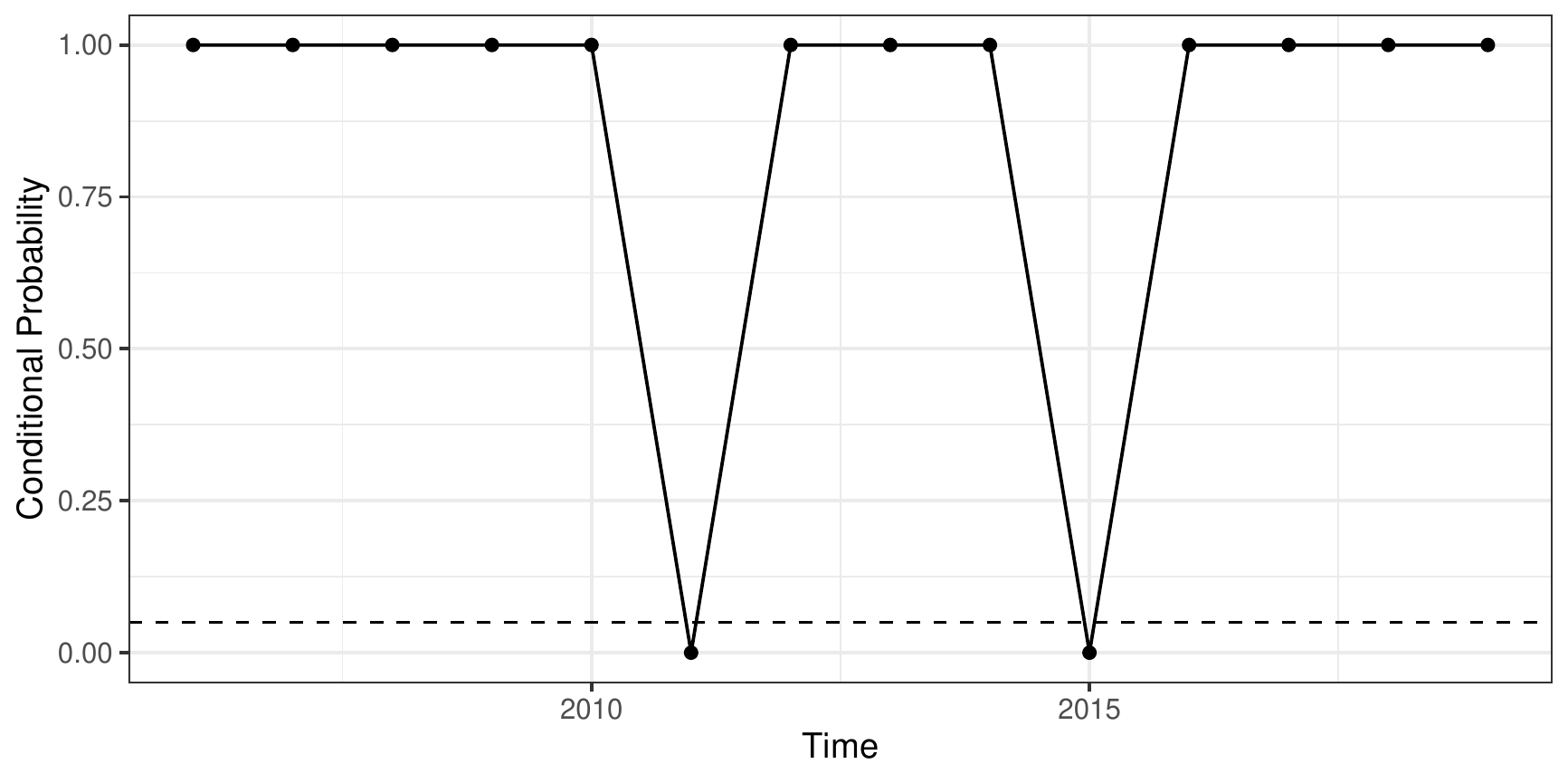}
	\caption{Oddnet conditional probabilities for the Canadian Parliament bill voting networks.}
	\label{fig:canadian2}
\end{figure}

\autoref{fig:canadian1} shows the networks for 2006, 2008, 2011, 2015, 2017 and 2019. The dataset does not include any party attributes, and so we have coloured all nodes in black. In 2015, we see clearly formed clusters and separation between two possible groups. The year 2011 is interesting because it resembles a network with multiple groups/factions.

Oddnet finds two anomalies from this dataset: 2011 and 2015. \autoref{fig:canadian2} shows the conditional probability of the networks with $\alpha = 0.05$ shown by a dashed line. The 2015 anomaly agrees with the work of \citet{Huang2020}. The anomaly in 2011 is somewhat different. In their book, \citet{gidengil2012dominance} ``explore the major fault lines that appeared in Canada's electoral landscape in the elections leading up to the 2011 electoral earthquake''. Therefore, it is fair to surmise that 2011 was a different year for Canadian politics. From the networks in \autoref{fig:canadian1} we see that both 2011 and 2015 are quite different from the others.

\subsubsection{US election blogs dataset}

\citet{Almquist2013} investigate the intra-group blog citation dynamics in the 2004 US presidential election collected by \citet{Butts2009}. The data was collected from July 22 to November 19 in 2004 at six hour intervals starting at midnight. There are 484 networks indexed by time. Each candidate's blog is considered a node and an edge exists from node $i$ to node $j$ at time $t$ if a link appears on blog $i$ to blog $j$ at time $t$.

\autoref{fig:usblogs1} shows blog networks at four different time stamps. Immediately we see that these networks are different from the previous examples in terms of network dynamics. Even though there are changes over time, the networks do not exhibit the same level of activity compared to previous examples. It is unreasonable to expect a blog post network to have similar dynamics to a social network.

\begin{figure}[!b]
	\centering
	\includegraphics[scale = 0.8]{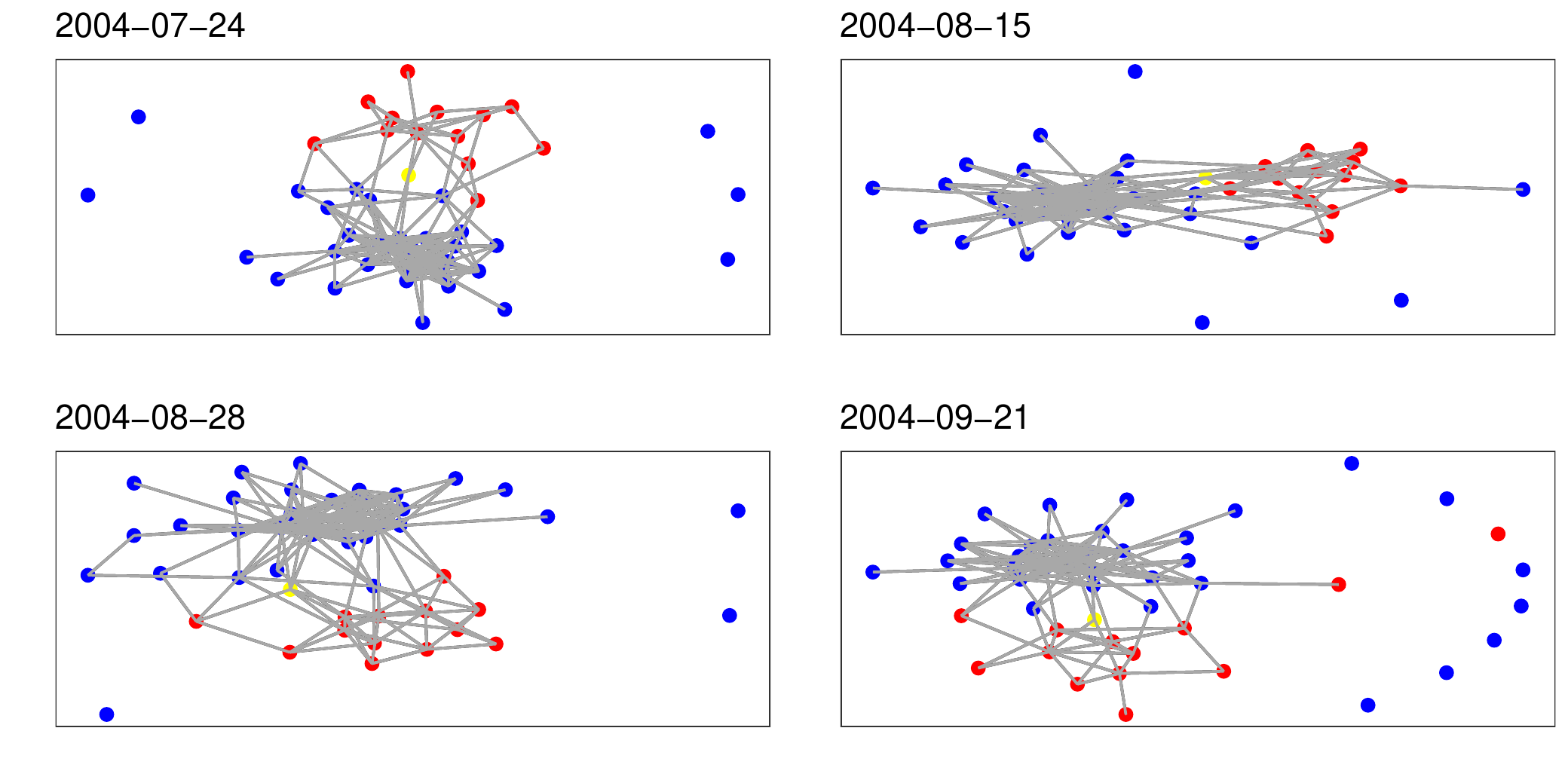}
	\caption{US election blog networks on four days with blue denoting Democratic blogs and red Republican blogs.}
	\label{fig:usblogs1}
\end{figure}

\begin{figure}[!tb]
	\centering
	\includegraphics[scale = 0.8]{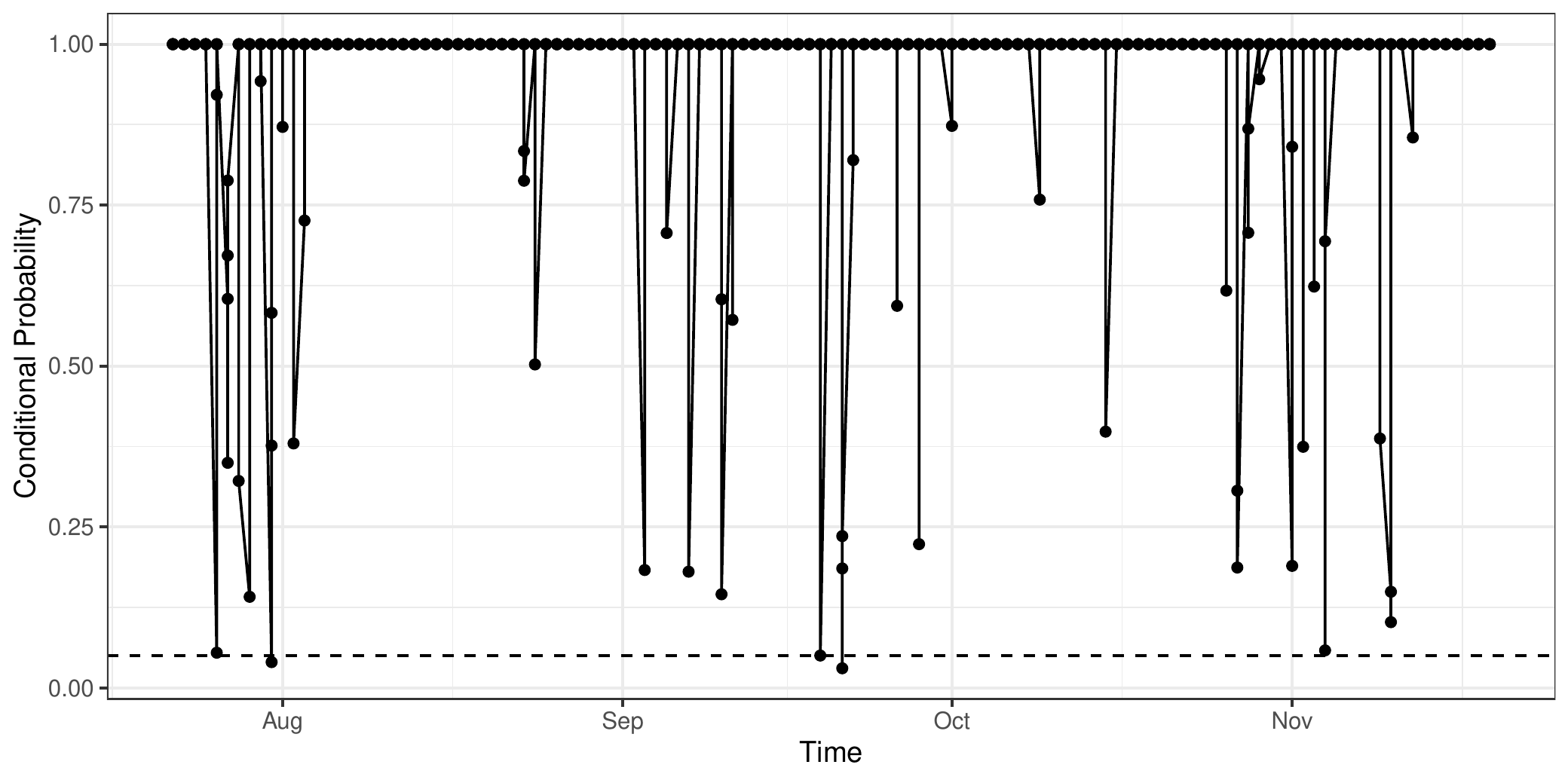}
	\caption{Oddnet conditional probabilities of the 2004 US election blog networks over time.}
	\label{fig:usblogs2}
\end{figure}

\autoref{fig:usblogs2} shows the conditional probability computed by oddnet with a horizontal dashed line at $\alpha = 0.05$. Oddnet identifies two anomalies corresponding to July 31 and September 21. These two anomalies correspond to two epochs identified by \citet{Butts2009} with a lag of one day. July 30 is noted as the end of the DNC convention and the start of RNC convention, and September 20 is the day the first presidential debate was held. Even though we do not identify these days as anomalous, we identify the next day as anomalous in both cases. We do not know if the blog posts were updated after the events took place. \autoref{fig:usblogs2} shows certain other dates with low conditional probability, even though they are not found anomalous. In addition to the anomalous dates, July 26 and November 4 also have low conditional probabilities. July 26 marks the start of the DNC Convention and November 3 marks the start of the post election period according to \citet{Butts2009}.

\section{Conclusion}\label{conclusion}

We have presented a statistical network anomaly detection method for dynamic/temporal networks. Oddnet computes network features and models them using time series methods. The residuals of the time series are used to find anomalies. The combination of time series modelling with network analysis is a major strength of oddnet. We demonstrate the effectiveness of oddnet using four synthetic experiments and four real examples. The synthetic experiments comprise sequences of networks generated using Erd\H{o}s-R\'enyi, Barab\'asi-Albert and Watts-Strogatz models. The results of oddnet on synthetic experiments are compared with LAD and Tensorsplat, two other network anomaly detection methods. Oddnet performs better than LAD and Tensorsplat. Oddnet's results on real examples pinpoint times corresponding to certain interesting occurrences. Future research avenues include extending oddnet to find anomalous subnetworks in large datasets.

\section{Supplementary materials}

The R package \texttt{oddnet} is available at \url{https://github.com/sevvandi/oddnet}. The programming scripts used in this paper are available at \url{https://github.com/sevvandi/supplementary_material/tree/master/oddnet}.

\bibliographystyle{agsm}
\bibliography{references.bib}

\end{document}